\DeclareMathSymbol{\shortminus}{\mathbin}{AMSa}{"39}
\newcommand{\rnc}{\renewcommand}
\newcommand{\nc}{\newcommand}
\newcommand{\mrm}{\mathrm}
\nc{\mb}{\mathbb}
\nc{\mc}{\mathcal}
\nc{\E}{\mb{E}}
\nc{\N}{\mb{N}}
\nc{\R}{\mb{R}}
\nc{\Q}{\mb{Q}}
\rnc{\P}{\mrm P}
\rnc{\d}{\mrm d}
\nc{\C}{\mc{C}}
\nc{\D}{\mc{D}}
\nc{\B}{\mc{B}}
\nc{\vbeta}{\bm \beta}
\nc{\vtheta}{\bm \theta}
\nc{\vX}{\bm X}
\nc{\vy}{\bm y}
\nc{\vU}{\bm U}
\nc{\vI}{\bm I}
\nc{\vE}{\bm E}
\nc{\ve}{\bm e}
\nc{\vV}{\bm V}
\nc{\vv}{\bm v}
\nc{\vS}{\bm S}
\nc{\vSigma}{\bm \Sigma}
\nc{\oPo}{\stackrel{\mrm p}{\rightarrow}}
\nc{\oWo}{\stackrel{w}{\rightarrow}}
\nc{\oDo}{\stackrel{d}{\longrightarrow}}
\nc{\eff}{\|F\|}
\def\E{{ E }}
\def\R{{ \mathbb{R} }}
\def\N{{ \mathbb{N} }}
\def\P{ P }
\def\E{ E }
\newcommand{\mbf}{\boldsymbol} 
\def\trans{{ ^\mathrm{\scriptscriptstyle T} }}
\newtheorem{lemma}{Lemma}
\newtheorem{assumption}{Assumption}
\newtheorem{corollary}{Corollary}
\newtheorem{theorem}{Theorem}
\newcommand\blfootnote[1]{%
  \begingroup
  \renewcommand\thefootnote{}\footnote{#1}%
  \addtocounter{footnote}{-1}%
  \endgroup
} 
\begin{document}

\title{\Large \bf Permutation inference in factorial survival designs with the CASANOVA}
\author[1,$*$]{Marc Ditzhaus}
\author[2]{Arnold Janssen}
\author[1]{Markus Pauly}

\affil[1]{Department of Statistics, TU Dortmund University, Germany.}
\affil[2]{Mathematical Institute, Heinrich-Heine University Duesseldorf, Germany.}

\maketitle

\begin{abstract}
\blfootnote{${}^*$ e-mail: marc.ditzhaus@tu-dortmund.de}
 We propose inference procedures for general nonparametric factorial survival designs with possibly right-censored data. Similar to additive Aalen models, null hypotheses are formulated in terms of cumulative hazards. Thereby, deviations are measured in terms of quadratic forms in Nelson--Aalen-type integrals. Different to existing approaches this allows to work without restrictive model assumptions as proportional hazards. In particular, crossing survival or hazard curves can be detected without a significant loss of power. For a distribution-free application of the method, a permutation strategy is suggested. The resulting procedures' asymptotic validity as well as their consistency are proven and their small sample performances are analyzed in extensive simulations. Their applicability is finally illustrated by analyzing an oncology data set.
\end{abstract}

\noindent{\bf Keywords:} Right censoring; additive Aalen model; local alternatives; factorial designs; oncology.


\section{Introduction}\label{sec:intro}
\cite{kristiansen2012} reviewed 175 studies with time to event endpoints published 
in five renowned journals. In 47$\%$ of these studies crossing survival curves were present. The alarming observation of his review was: ``{Among 
	studies with survival curve crossings, Cox regression was performed in 66\% and log-rank-test in 70\% of the studies.}'' Under the assumption of proportional hazards the log-rank test and the Cox regression are indeed very powerful tools. Otherwise, however, log-rank tests significantly loose power and Cox regressions cannot be interpreted appropriately, in particular, when the survival curves cross.  Thus, as 
stated by \cite{bouliotis_billingham_2011}: ``{There is a need in the clinical community to clarify methods that are appropriate when survival curves 
	cross.}''
This especially holds in oncology, where crossing survival curves are frequently observed \citep{gahrton2013,weedaETAL2013fibrolamellar,smithETAL2014} due to a potentially delayed treatment effect of immunotherapy  \citep{mickChen2015}.

In the two-sample setting, various inferences methods have been designed to detect nonproportional hazard alternatives and, in particular, crossing curves. We refer to \cite{liETAL_2015_comp} for a detailed review and to \cite{LiuDitzhausXu_2020_ABC} for a recent proposal. Thereof, a tempting approach are extensions of so-called weighted log-rank tests \citep{TaroneWare1977,Gill1980,ABGK,BathkeETAL2009}, for which the power is optimized for certain nonproportional hazard alternatives by adding a corresponding weight function. For example, the weights of \cite{HarringtonFleming1982} can be used for late treatment effects, as also recalled by \cite{fine2007} and \cite{SuZhu_2018}. Such weights probably would have helped  \cite{jacobsETAL2016} to confirm their initial assumption 
in an ovarian cancer screening trail. Instead they used the log-rank test and stated: ``{The  main  limitation  of  this  trial  was  our  failure  to  anticipate  the  late  effect  of  screening  in  our  statistical  design. Had we done so, the weighted log-rank test could have  been  planned  in  line  with  many  other  large  cancer  screening  trials.}'' This quote illustrates the problem of the log-rank test and its weighted versions: they are designed for specific alternatives and prior knowledge is needed to choose the optimal weight. To overcome this selection problem, \cite{BrendelETAL2014} introduced a combination approach of several weights leading to procedures with broader power functions. Recently, their approach was revisited and simplified leading to computationally more efficient test versions
\citep{ditzhaus:pauly:2019,ditzhaus:friedrich:2018}. 

It is the aim of the present paper to extend their idea 
to multiple samples and more general factorial designs. This way we not only address the $k$-sample problem with crossing survival curves for which only 
a handful of relevant methods  exist \citep{BathkeETAL2009,liuYin_2017, chen2016, gorfineETAL_2019} but also 
develop methods that fully 
exploit the structure of factorial survival designs. 
Such methods shall not only allow the detection of main factor effects (e.g. treatment or gender) but also infer potential interaction effects  \citep{gissi:1990,cassidy:etal:2008,green2012,kurz:etal:2015} as, e.g., also stated by \cite{lubsen:pocock:1994}: ``{it is desirable for reports of factorial trials to include estimates of the interaction between the treatments}''. In contrast to Cox, Aalen, or Cox-Aalen regression models \citep{cox1972regression,scheike2002additive,scheike2003extensions}, there is no need 
to introduce multiple dummy variable for nominal factors (e.g. different treatments) in the factorial design set-up, which is even favorable in uncensored situations \citep{green2002factorial,green2012}. In the context of survival data, there are just a few nonparametric methods accounting for factorial designs: the approaches of \cite{akritas:brunner:1997}, which require a strong assumption on the underlying censoring distribution that is often too strict from a practical point of view, 
and the procedure of  \cite{dobler:pauly:2019}. However, the latter formulates null hypotheses in terms of certain concordance effects that restricts their analysis to a pre-specified time range $[0,\tau]$. Moreover, both approaches are not flexibly adoptable to detect certain crossing structures and have not yet been fully implemented into statistical software. 

In contrast, we follow the spirit of the additive model of \cite{aalen:1980} and  formulate our null hypotheses by means of cumulative hazard functions. 
Inspired by \cite{neuhaus:1993} and \cite{JanssenMayer2001}, we derive critical values by means of a permutation procedure. 
Under exchangeable data, e.g., when the survival and the censoring distributions are the same over all groups, respectively, the permutation strategy leads to a finitely exact test under the null hypothesis. Moreover, for non exchangeable situations, this exactness can be preserved at least asymptotically by following the idea of permuting studentized statistics. This desirable property of studentized permutation tests was mainly explored for testing means and other functionals in the two-sample case \citep{janssen:1997studentized,janssenPauls2003,pauly:2011:discussion}. It was later extended to one-way layouts by \cite{chung2013exact} and finally reached its full potential under general factorial designs \citep{pauly2015asymptotic, friedrich2017permuting, smaga2017diagonal, harrar2019comparison,ditzhaus2019qanova}.
In our more complex survival setting, the weighted combination approach paired with the permutation strategy leads to a test 
\begin{enumerate}[(a)]
	\item without any restrictive assumption on the censoring distribution or the time range, 
	\item with reasonable power under proportional hazards as well as for crossing survival curve scenarios, 
	\item for general factorial designs allowing the study of main and interaction effects, 
	\item being asymptotically valid with satisfactory small sample size performance.
\end{enumerate}

In Section \ref{sec:setup}, we introduce the survival model and the null hypotheses formulated in
terms of cumulative hazard rate functions. To test for certain main or interaction effects, we propose respective
Wald-type statistics based on weighted Nelson--Aalen-type integrals, see Section~\ref{sec:asy_properties}. We prove their asymptotic
validity and derive their power behavior under local alternatives. Motivated by the latter, we suggest to combine different weight functions into a joint Wald-type statistic to obtain a powerful method for various alternatives simultaneously, e.g. proportional hazards and crossing survival curves. Respective permutation versions of them, promising a better finite sample performance, are shown to be asymptotic exact in Section~\ref{sec:permutation}. A simulation study presented in Section~\ref{sec:simu} reveal an actual improvement when using the permutation approach and show that the combination strategy actually result in a powerful test for proportional hazards as well as crossing curves alternatives. Finally, the tests' applicability are illustrated by analyzing an oncology data set in Section~\ref{sec:real_data}.

\section{The set-up}\label{sec:setup}
Our general survival model is given by mutually independent positive random variables 
\begin{align}\label{eqn:model}
	T_{ji}\sim F_j,\quad C_{ji}\sim G_{j} \quad (j=1,\ldots,k;\,i=1,\ldots,n_j),
\end{align}
where $T_{ji}$ is the actual survival time with continuous distribution function $F_j$  and $C_{ji}$ denotes the corresponding right-censoring time with continuous distribution function $G_j$. This set-up allows the consideration of simple one-way but also of higher-way layouts. For illustration, let us consider a two-way design with factors $B$ (having $b$ levels) and $C$ (possessing $c$ levels). In this scenario, we set $k=b\cdot c$ and split up the group index $j$ into $j=(j_B,j_C)$ for $j_B=1,\ldots,b$ and $j_C=1,\ldots,c$. More complex designs, e.g. hierarchical designs with nested factors, can be incorporated into this framework as well. We  refer to \cite{dobler:pauly:2019} for more details. 

Based on the observation time  $X_{ji}=\min(T_{ji},C_{ji})$ and its censoring status $\delta_{ji}=I\{X_{ji}=T_{ji}\}$, where $I(\cdot)$ denotes the indicator function, we like to infer hypotheses formulated in terms of the cumulative hazard rate functions $A_j(t) = \int_0^t (1-F_j)^{-1}F_j$ $(t\geq 0$):
\begin{align}\label{eqn:null}
	\mathcal H_0:  \mbf{H} \mbf{A} = \mbf{0}_d, \quad\mbf{A}=(A_{1},\ldots,A_k)\trans,
\end{align}
where $\mbf{H}\in \R^{d\times k}$ is a contrast matrix, i.e. $\mbf{H}\mbf{1_k} = \mbf{0}_d$, and $\mbf{0}_d$ as well as $\mbf{1}_d$ are vectors in $\R^d$ consisting of $0$'s and $1$'s only. Here and subsequently, we use the following standard matrix notation: $B\trans$ is the transpose and $\mbf{B}^+$ is the Moore--Penrose inverse of a matrix $\mbf{B}$. The contrast matrix in \eqref{eqn:null} is chosen in regard to the underlying question of interest. 
For example in a one-way layout, the null hypothesis $$\mathcal H_0: \{A_{1}=\ldots=A_{k} \} = \{\mbf{P}_k \mbf{A} = \mbf{0}_k\}$$ of no group effect can be expressed in terms of the  contrast matrix $\mbf{P}_k = \mbf{I}_k - (\mbf{J}_k/k)$, where $\mbf{I}_{k}$ is the $k\times k$-dimensional unity matrix and $\mbf{J}_k = \mbf{1}_k\trans\mbf{1}_k\in \R^{k\times k}$ consists of 1 only. 

Switching to a two-way layout ($k=bc$) with the factors $B$ (having $b$ levels) and $C$ (possessing $c$ levels), the relevant contrast matrices are $\mbf{H}_{B}=  \mbf{P}_b \otimes (\mbf{J}_c/c)$, $\mbf{H}_C = (\mbf{J}_b/b) \otimes  \mbf{P}_c$ and $\mbf{H}_{BC} = \mbf{P}_b \otimes  \mbf{P}_c$, where  $\otimes$ is the Kronecker product. They can be used to check the null hypotheses 
\begin{itemize}
	\item \textit{No main effect B:} $\{\mbf{H}_{B}\mbf{A} = \mbf{0}_{k} \} = \{ \bar{A}_{1\cdot} = \ldots = \bar{A}_{b\cdot}\}$
	
	\item \textit{No main effect C:} $\{\mbf{H}_{C}\mbf{A} = \mbf{0}_{k} \} = \{ \bar{A}_{\cdot1} = \ldots = \bar{A}_{\cdot c}\}$
	
	\item \textit{No interaction effect:} $\{\mbf{H}_{BC}\mbf{A} = \mbf{0}_{k} \} = \{ \bar{A}_{\cdot \cdot}- \bar{A}_{\cdot j_C} - \bar{A}_{j_B\cdot} + {A}_{j_Bj_C}   = 0\}$
\end{itemize}
Here, $\bar{A}_{j_B \cdot}$, $\bar{A}_{\cdot j_C}$ and $\bar{A}_{\cdot \cdot}$ are the means over the dotted indices. In case of existing hazard rates $\alpha_j(t)  = dA_j(t)/dt$ $(t\geq 0)$ and having the additive Aalen model in mind, we can rewrite these null hypotheses in a more lucid way by decomposing the hazard rate $\alpha_j=\alpha_{j_Bj_C}$ into 
\begin{align*}
	\alpha_{j_Bj_C}(t) = \alpha_{0}(t) + \beta_{j_B}(t) + \gamma_{j_C}(t) + (\beta\gamma)_{j_Bj_C}(t) 
\end{align*}
with side conditions $\sum_{j_B} \beta_{j_B} = \sum_{j_C} \gamma_{j_C} =  \sum_{j_B} (\beta\gamma)_{j_Bj_C} = \sum_{j_C} (\beta\gamma)_{j_Bj_C} = 0$. Then we can rewrite $\{\mbf{H}_{C}\mbf{A} = \mbf{0}_k \} =\{ \gamma_{j_C} = 0\text{ for all }j_C\}$ or 
$$\{\mbf{H}_{BC}\mbf{A} = \mbf{0}_{k} \} =\{ (\beta\gamma)_{j_Bj_C} = 0\text{ for all }j_B,j_C\}$$
for the interaction hypothesis.
For higher-way layouts and more complex designs, such as nested settings, we refer the reader to \cite{pauly2015asymptotic} and \cite{dobler:pauly:2019}.


As in analysis-of-variance settings \citep{brunner:dette:munk:1997,pauly2015asymptotic,smaga2017diagonal}, it is preferable to work with the projection matrix $\mbf{T}=\mbf{H}\trans(\mbf{H}\mbf{H}\trans)^+ \mbf{H}$ over $\mbf{H}$ itself. Beside from being unique, $\mbf{T}$ is symmetric and idempotent, and describes the same null hypothesis as $\mbf H$ does. We will therefore work with $\mbf{T}$ when formulating our testing procedure. In addition, we need the usual counting process notation. Thus, let $N_j(t)=\sum_{i=1}^{n_j}I\{X_{ji}\leq t, \delta_{ji}=1\} $ be the number of observed events within group $j$ until time $t$ and introduce $Y_j(t)=\sum_{i=1}^{n_j}I\{X_{ji}\geq t\}$, the number of individuals being at risk just before $t$ in the same group. These processes allow us to define the Nelson--Aalen estimator for $A_j$ given by
\begin{align*}
	\widehat A_{j}(t)=\int_{0}^t \frac{I\{Y_j(s)>0\}}{Y_j(s)}\,\mathrm{ d }N_j(s)\quad (j=1,2;\,t\geq 0).
\end{align*}
For our purposes, the Kaplan--Meier estimator is only required for all $n=\sum_{j=1}^kn_j$  observations without distinguishing between the groups. This so-called pooled Kaplan--Meier estimator $\widehat F$ can be expressed in terms of the pooled counting processes $N= \sum_{j=1}^kN_j$ and $Y=\sum_{j=1}^kY_j$ by
\begin{align*}
	1-\widehat F(t)=\prod_{(j,i):X_{ji}\leq t}\Bigl( 1- \frac{\delta_{ji}}{Y(X_{ji})} \Bigr)=\prod_{(j,i):X_{ji}\leq t}\Bigl( 1- \frac{\Delta N(X_{ji})}{Y(X_{ji})} \Bigr)\quad (t\geq 0),
\end{align*}
where $\Delta N(t)=N(t) - N(t-)$ denotes the increment of the counting process $N$ at time $t$. In the same way, 
$\widehat A(t)=\int_0^t I\{Y>0\}/Y \,\mathrm{ d }N$ $\;(t\geq 0)$ denotes the pooled Nelson--Aalen estimator.

\section{Asymptotic results}\label{sec:asy_properties}
\subsection{Wald-type test}
Throughout, we assume non-vanishing groups $n_j/n\to \kappa_j\in(0,1)$ as $\min(n_j:j=1,\ldots,k)\to 0$. Moreover, we exclude the trivial case of purely censored observations in any of the groups by assuming that $F_j(t)>0$ and $G_j(t)<1$ for all $j=1,\ldots,k$ and some $t>0$. 

Weighted log-rank statistics \citep{FlemingETAL1987,ABGK} of the form
\begin{align*}
	\Big(\frac{n}{n_1n_2}\Big)^{1/2} \int_0^\infty \widetilde w\{ \widehat F_n(t-)\}\frac{Y_1(t)Y_2(t)}{Y(t)} \Big\{\mathrm{ d }\widehat A_1(t) - \mathrm{ d }\widehat A_2(t)\Big\},
\end{align*}
will later build the fundament of our new test statistics. Here, $t\mapsto \widehat F_n(t-)$ is the left continuous version of $\widehat F_n$ and $\widetilde w$ is a weight function taken from the space $\mathcal W$ consisting of all continuous functions $\widetilde w:[0,1] \to \R$ of bounded variations with $\widetilde w(t)\neq 0$ for some $t\in[0,1]$.  \cite{FlemingHarrington} considered a subclass of these weights having the shape $\widetilde w(t)=t^r(1-t)^g$ $(r,g\in\N_0)$, see their Definition 7.2.1. Setting $r=g=0$ we obtain the log-rank test \citep{Mantel1966,PetoPeto1972} and the choice $(r,g)=(1,0)$ leads to the Prentice--Wilcoxon test. In general, these weights can be used to prioritize (mid-)late, (mid-)early or central times by choosing $r,g$ appropriately. For our purposes, weights, e.g. $\widetilde w(t)=1-2t$, intersecting the $x$-axis are of special interest because they are designed for crossing hazard alternatives. Having all this weights at hand, the question arises: which weight should be chosen? We address this question in detail in the next two sections, but first we introduce the relevant components of the newly developed test statistic. For the sake of a clear and simple presentation, we restrict here to polynomial weights $\widetilde w$ covering the main relevant cases. However, more general weight functions can be treated analogously as discussed in the supplementary material. 
\begin{assumption}\label{ass:w}
	Let $\widetilde w\in \mathcal W$ be a nontrivial polynomial.
\end{assumption}

First, we extend the weighted log-rank integrand to the present situation of multiple subgroups
\begin{align}\label{eqn:def_wn}
	w_n(t) = \widetilde w\{ \widehat F_n(t-)\}\frac{ Y_1(t)\ldots Y_k(t)}{nY(t)^{k-1}}\quad (t\geq 0)
\end{align}
and then define the Nelson--Aalen-type integral over these new integrands
\begin{align*}
	Z_{nj}(w_n)= n^{1/2}\int_0^\infty w_n(t) \,\mathrm{ d }\widehat A_j(t).
\end{align*}
Using the standard martingale approach \citep{Gill1980,ABGK} we can show asymptotic normality for a centred version of this integral
\begin{align*}
	\widetilde Z_{nj}(w_n) = n^{1/2}\int_0^\infty w_n(t) \,\Big\{\mathrm{ d }\widehat A_j(t) - \mathrm{ d } A_j(t) \Big\}.
\end{align*} 
\begin{theorem}\label{theo:null_unc}
	Under Assumption \ref{ass:w}, ${\widetilde Z}_{nj}(w_n)$ converges in distribution to $N\{0,\sigma_j^2(w)\}$ for $\sigma_j^2(w)>0$ as $n\to\infty$, where $\sigma_j^2(w)$ can be consistently estimated by
	\begin{align*}
		\widehat \sigma_j^2(w_n) = n\int_0^\infty  \frac{w_{n}(t)^2}{Y_j(t)} \,\mathrm{ d }\widehat A_j(t).
	\end{align*}
\end{theorem}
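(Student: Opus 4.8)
The plan is to represent $\widetilde Z_{nj}(w_n)$ as a stochastic integral against a counting-process martingale and then apply Rebolledo's martingale central limit theorem. Set $M_j(t)=N_j(t)-\int_0^t Y_j(s)\,\intd A_j(s)$, which, with respect to the filtration generated by the data, is a square-integrable martingale with predictable variation $\langle M_j\rangle(t)=\int_0^t Y_j(s)\,\intd A_j(s)$ (here $A_j$ is continuous, so $\Delta A_j\equiv 0$). Since $\widehat F_n(t-)$ and each $Y_l(t)$ are left-continuous and adapted, $w_n$ is predictable. On $\{Y_j>0\}$ one has $\intd\widehat A_j-\intd A_j=Y_j^{-1}\intd M_j$; moreover $w_n(t)=0$ whenever any $Y_l(t)=0$, because the numerator of $w_n$ contains the factor $Y_j(t)$, so the $\{Y_j=0\}$ part drops out. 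This yields
\[
\widetilde Z_{nj}(w_n)=H_n(\infty),\qquad H_n(t)=n^{1/2}\int_0^t w_n(s)\,\frac{I\{Y_j(s)>0\}}{Y_j(s)}\,\intd M_j(s),
\]
with $H_n$ a local square-integrable martingale.

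Next I would identify the deterministic limits. A uniform law of large numbers for the at-risk processes gives $Y_j(\cdot)/n\oPo y_j:=\kappa_j\pi_j$ on compacts, with $\pi_j(t)=\{1-F_j(t)\}\{1-G_j(t)\}=\P(X_{j1}\ge t)$, and the pooled Kaplan--Meier estimator satisfies $\widehat F_n\oPo F$ uniformly, where $1-F=\exp(-A^{*})$ is the survival function of the pooled cumulative hazard $A^{*}$. Combined with the continuity of $\widetilde w$ and the continuous mapping theorem, this gives $w_n\oPo w$, where
\[
w(t)=\widetilde w\{F(t)\}\,\frac{\prod_{l=1}^k\kappa_l\pi_l(t)}{y(t)^{k-1}},\qquad y=\sum_{l=1}^k\kappa_l\pi_l .
\]
The product structure is convenient: $w$ is bounded and vanishes as the $\pi_l$ degenerate, which tames the integrand near the end of the observable support.

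With these ingredients I would check the two hypotheses of Rebolledo's theorem. Using $n/Y_j\oPo(\kappa_j\pi_j)^{-1}$ and the convergence of $w_n$, the predictable variation satisfies
\[
\langle H_n\rangle(\infty)=n\int_0^\infty w_n(t)^2\,\frac{I\{Y_j(t)>0\}}{Y_j(t)}\,\intd A_j(t)\ \oPo\ \int_0^\infty\frac{w(t)^2}{\kappa_j\pi_j(t)}\,\intd A_j(t)=:\sigma_j^2(w).
\]
For the asymptotic-negligibility (Lindeberg) condition, note that $w_n=O_p(1)$ and $Y_j$ is of order $n$, so a single jump of the integrand has size $n^{1/2}w_n I\{Y_j>0\}/Y_j=O_p(n^{-1/2})$ uniformly (no ties, since $F_j,G_j$ are continuous); hence the truncated predictable variation tends to zero in probability. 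Rebolledo's theorem then delivers $\widetilde Z_{nj}(w_n)\oDo N\{0,\sigma_j^2(w)\}$. Strict positivity of $\sigma_j^2(w)$ follows because a nontrivial polynomial $\widetilde w$ has finitely many zeros, so $w(t)\ne 0$ for $A_j$-almost all $t$ on the nondegenerate support guaranteed by $F_j(t)>0$ for some $t$.

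Finally, for consistency of the variance estimator I would write $\widehat\sigma_j^2(w_n)=n\int_0^\infty w_n(t)^2 Y_j(t)^{-2} I\{Y_j(t)>0\}\,\intd N_j(t)$ and subtract the predictable variation above; since $\intd N_j=\intd M_j+Y_j\,\intd A_j$, the difference is the mean-zero martingale $n\int_0^\infty w_n^2 Y_j^{-2}I\{Y_j>0\}\,\intd M_j$, whose variance equals $\E\big(n^2\int_0^\infty w_n^4 Y_j^{-3}I\{Y_j>0\}\,\intd A_j\big)=O(n^{-1})$. Thus $\widehat\sigma_j^2(w_n)-\langle H_n\rangle(\infty)\oPo 0$, and with $\langle H_n\rangle(\infty)\oPo\sigma_j^2(w)$ this gives $\widehat\sigma_j^2(w_n)\oPo\sigma_j^2(w)$. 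I expect the main obstacle to be the uniform control of the random integrand $w_n$ --- the interplay of the random weight $\widetilde w\{\widehat F_n(t-)\}$, the several at-risk ratios, and the degeneracy of the denominators near the boundary of the support --- rather than the martingale CLT itself, which is routine once the predictable-variation limit and the negligibility of jumps are secured.
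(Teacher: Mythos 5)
Your architecture coincides with the paper's: the same martingale representation (with the correct observation that $w_n$ vanishes on $\{Y_j=0\}$ because $Y_j$ appears in its numerator), the same deterministic jump bound $n^{1/2}|w_n|/Y_j\leq n^{-1/2}K$ settling the Lindeberg condition, and Rebolledo's theorem. The genuine gap is exactly the step you defer: the convergence $\langle H_n\rangle(\infty)=n\int_0^\infty w_n^2\, Y_j^{-1} I\{Y_j>0\}\,\intd A_j \oPo \sigma_j^2(w)$ does not follow from ``$n/Y_j\oPo(\kappa_j\pi_j)^{-1}$ and $w_n\oPo w$'' as you assert. Those convergences are uniform only on compacts $[0,\tau_0]$ with $\tau_0<\tau$ (your claim of uniform consistency of $\widehat F_n$ on all of $[0,\infty)$ is also not available), while the integral runs over the whole support; and the only deterministic envelope for the integrand is $n w_n^2/Y_j\leq K^2\, Y_j/n\leq K^2$, a constant, which is \emph{not} $\intd A_j$-integrable since $A_j(\tau)$ need not be finite. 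So pointwise convergence plus boundedness cannot close the argument near the boundary of the observable support --- precisely the obstacle you name but do not resolve. The paper resolves it with Gill's extension of dominated convergence to random integrands \citep[Prop.\ II.5.3]{ABGK}: one must exhibit, for every $\lambda>1$, an \emph{integrable} $g_\lambda$ with $\P\bigl(|I_n(t)|\leq g_\lambda(t)\text{ for all }t\bigr)\geq 1-e/\lambda$. This envelope is manufactured from Wellner's (1978) in-probability bound $\sup_{t\in[0,\tau)} n_j^{-1}Y_j(t)/[\{1-F_j(t)\}\{1-G_j(t)\}]\leq\lambda$ with probability at least $1-e/\lambda$, which upgrades $Y_j/n\leq 1$ to $Y_j/n\leq 2\kappa_j\lambda(1-F_j)(1-G_j)$ for large $n$ and gives $g_\lambda=2K^2\lambda\, y_j\alpha_j$, integrable because $\int_0^\infty y_j\,\intd A_j=\kappa_j\int_0^\infty(1-G_j)\,\intd F_j<\infty$. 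Without this (or an equivalent truncation argument with a tail contribution controlled uniformly in $n$), your proof of both the CLT and the variance limit is incomplete.

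One place where you genuinely deviate, correctly, is the consistency of $\widehat\sigma_j^2(w_n)$: the paper reads it off from Rebolledo's conclusion that the optional covariation $[M_n](\infty)=n\int_0^\infty w_n^2\,Y_j^{-2}\,\intd N_j$ converges to the same limit as $\langle H_n\rangle(\infty)$, whereas you subtract the predictable variation and bound the variance of the martingale $n\int_0^\infty w_n^2\,Y_j^{-2}I\{Y_j>0\}\,\intd M_j$ by $\E\bigl(n^2\int_0^\infty w_n^4\,Y_j^{-3}\,\intd A_j\bigr)=O(n^{-1})$; this is valid, since $n^2w_n^4/Y_j^3=(nw_n/Y_j)^3(w_n/n)\leq K^4\,Y_j/n^2$ and $\E\int_0^\infty Y_j\,\intd A_j=n_j\int_0^\infty(1-G_j)\,\intd F_j$. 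But note it merely reduces the claim to the same unproven limit of $\langle H_n\rangle(\infty)$, so it does not circumvent the gap above. Your positivity argument for $\sigma_j^2(w)$ (finitely many zeros of the nontrivial polynomial $\widetilde w$, monotone coupling of $F_0$ with $F_j$) is in substance the paper's regularity argument specialized to $m=1$ and is fine.
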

Now, we are able to formulate a Wald-type statistic for testing $\mathcal H_0: \mbf{T}\mbf{A} = \mbf{0}_k$ given by
\begin{align}\label{eqn:def_WTS_one+weight}
	S_n(w_n) = [\mbf{T}\mbf{Z}(w_n)]\trans(\mbf{T}\mbf{\widehat\Sigma}(w_n)\mbf{T})^+\mbf{T}\mbf{Z}(w_n), \quad \mbf{\widehat \Sigma}(w_n) = \text{diag}\{\widehat \sigma^2_1(w_n),\ldots,\sigma^2_1(w_n)\}
\end{align}
and conclude from Theorem 9.2.2 of \cite{rao:mitra:1971} and Theorem \ref{theo:null_unc}:
\begin{corollary}\label{cor:WTS_null_one+weight}
	Under Assumption \ref{ass:w} and $\mathcal H_0:\mbf{T}\mbf{A}=\mbf{0}_k$, $S_n(w_n)$ converges in distribution to a $\chi^2_f$-distribution with $f=\text{rank}(\mbf{T})$ degrees of freedom as $n\to\infty$.
\end{corollary}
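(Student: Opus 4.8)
The plan is to rewrite $S_n(w_n)$ as a quadratic form in an asymptotically centred Gaussian vector whose (pseudo-inverted) covariance is being estimated consistently, and then to read off the $\chi^2_f$ limit from the cited distributional result. The first, and structurally decisive, step is to remove the drift. I would exploit that the weight $w_n$ in \eqref{eqn:def_wn} does \emph{not} carry a group index. Writing $\mbf{Z}(w_n)=\widetilde{\mbf{Z}}(w_n)+\mbf{b}_n$ with the drift vector $\mbf{b}_n=n^{1/2}\bigl(\int_0^\infty w_n\,\mathrm{d}A_1,\ldots,\int_0^\infty w_n\,\mathrm{d}A_k\bigr)\trans$, linearity of the integral and the common integrand give, for every row $i$,
\begin{align*}
	(\mbf{T}\mbf{b}_n)_i = n^{1/2}\sum_{j=1}^k (\mbf{T})_{ij}\int_0^\infty w_n\,\mathrm{d}A_j = n^{1/2}\int_0^\infty w_n\,\mathrm{d}(\mbf{T}\mbf{A})_i .
\end{align*}
Under $\mathcal H_0:\mbf{T}\mbf{A}=\mbf{0}_k$ the function $(\mbf{T}\mbf{A})_i$ vanishes identically, hence $\mbf{T}\mbf{b}_n=\mbf{0}_k$ \emph{exactly}, so that $\mbf{T}\mbf{Z}(w_n)=\mbf{T}\widetilde{\mbf{Z}}(w_n)$ on the null. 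This is what makes the centring used in Theorem~\ref{theo:null_unc} directly applicable.

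Next I would upgrade the marginal convergence of Theorem~\ref{theo:null_unc} to joint asymptotic normality $\widetilde{\mbf{Z}}(w_n)\oDo N(\mbf{0}_k,\mbf{\Sigma}(w))$ with $\mbf{\Sigma}(w)=\mathrm{diag}\{\sigma_1^2(w),\ldots,\sigma_k^2(w)\}$. Representing $\widetilde Z_{nj}(w_n)$ as a stochastic integral of the predictable process $w_n/Y_j$ against the group-$j$ counting-process martingale and invoking the multivariate martingale central limit theorem, the marginals are exactly those of Theorem~\ref{theo:null_unc}, while the limiting covariance is \emph{diagonal}: the predictable covariation between the martingales of distinct groups $j\neq j'$ vanishes by independence, so no off-diagonal terms survive. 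Together with the consistency $\mbf{\widehat\Sigma}(w_n)\oPo\mbf{\Sigma}(w)$ from Theorem~\ref{theo:null_unc}, Slutsky's lemma yields $\mbf{T}\mbf{Z}(w_n)\oDo N(\mbf{0}_k,\mbf{T}\mbf{\Sigma}(w)\mbf{T})$ and $\mbf{T}\mbf{\widehat\Sigma}(w_n)\mbf{T}\oPo\mbf{T}\mbf{\Sigma}(w)\mbf{T}$.

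The final step turns this into convergence of the quadratic form, and I expect the main obstacle to lie precisely here, because the Moore--Penrose inverse $\mbf{M}\mapsto\mbf{M}^+$ is \emph{not} continuous, so one cannot just compose the above convergence with the continuous mapping theorem. The remedy is a rank-stabilisation argument. Since every $\sigma_j^2(w)>0$, the matrix $\mbf{\Sigma}(w)$ is positive definite; writing $\mbf{T}\mbf{\Sigma}(w)\mbf{T}=(\mbf{T}\mbf{\Sigma}(w)^{1/2})(\mbf{T}\mbf{\Sigma}(w)^{1/2})\trans$ and using invertibility of $\mbf{\Sigma}(w)^{1/2}$ gives $\mathrm{rank}(\mbf{T}\mbf{\Sigma}(w)\mbf{T})=\mathrm{rank}(\mbf{T})=f$. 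By lower semicontinuity of the rank under $\oPo$ together with the deterministic bound $\mathrm{rank}(\mbf{T}\mbf{\widehat\Sigma}(w_n)\mbf{T})\le\mathrm{rank}(\mbf{T})$, the rank of $\mbf{T}\mbf{\widehat\Sigma}(w_n)\mbf{T}$ equals $f$ with probability tending to one. Along such rank-preserving sequences the pseudo-inverse \emph{is} continuous, so $(\mbf{T}\mbf{\widehat\Sigma}(w_n)\mbf{T})^+\oPo(\mbf{T}\mbf{\Sigma}(w)\mbf{T})^+$, and a further use of Slutsky's lemma and the continuous mapping theorem gives $S_n(w_n)\oDo\mbf{Y}\trans(\mbf{T}\mbf{\Sigma}(w)\mbf{T})^+\mbf{Y}$ for $\mbf{Y}\sim N(\mbf{0}_k,\mbf{T}\mbf{\Sigma}(w)\mbf{T})$. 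Since here the Gaussian vector is paired with the pseudo-inverse of its \emph{own} covariance, Theorem~9.2.2 of \cite{rao:mitra:1971} identifies the limit as a $\chi^2$ variable with $\mathrm{rank}(\mbf{T}\mbf{\Sigma}(w)\mbf{T})=f$ degrees of freedom, which is the claim.
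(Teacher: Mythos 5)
Your proposal is correct and follows essentially the same route as the paper: a joint martingale central limit theorem for the centred vector $\widetilde{\mbf{Z}}(w_n)$ with diagonal limit covariance (the paper's Theorem~\ref{theo:null_unc_general} with $m=1$ and $\gamma_j\equiv 0$), consistency of $\mbf{\widehat\Sigma}(w_n)$, and the quadratic-form result of Theorem~9.2.2 of \cite{rao:mitra:1971}. You merely spell out two steps the paper leaves implicit, namely the exact cancellation $\mbf{T}\mbf{b}_n=\mbf{0}_k$ of the drift under $\mathcal H_0$ (which the paper absorbs into its local-alternatives framework) and the rank-stabilisation argument that legitimises passing $\oPo$ through the otherwise discontinuous Moore--Penrose inverse, a point the paper covers only by proving regularity of $\mbf{\Sigma}$ and invoking the continuous mapping theorem.
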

To motivate the combination approach of several weights, we study the asymptotic power behavior of $S_n(w_n)$ under local alternatives.

\subsection{Local alternatives}\label{sec:local}
To this end, we start with a fixed null setting given by a vector $\mbf{A}=(A_1,\ldots,A_k)^T$ with $\mbf{TA}=\mbf{0}_k$ and corresponding hazard rates $\alpha_j(t)= \,\mathrm{ d }A_j(t)/\,\mathrm{ d }t$ $(t\geq 0)$. Disturbing them as follows, we get a local alternative $(A_{1n},\ldots,A_{nk})$ tending with a rate of $n^{-1/2}$ to the null setting $\mbf{A}$:
\begin{align}\label{eqn:local_alternatives}
	\frac{\alpha_{nj}(t)}{\alpha_j(t)} = 1 + n^{-1/2} \gamma_j(t) \quad (j=1,\ldots,k;t\geq 0),
\end{align}
where the right hand side is non-negative and $\int_0^t \gamma_j(u) \alpha_j(u) \,\mathrm{ d }u \in \R$ for all $t\geq 0$ fulfilling $F_j(t)<1$. To simplify the situation, we may restrict to perturbations in the same direction but with possibly different strengths 
\begin{align}\label{eqn:def_gammaj}
	\gamma_j(t) = \theta_j \gamma\{F_0(t)\}.
\end{align}
Here, $F_0$ is the limit function of the pooled-Kaplan--Meier estimator, see the supplement for its concrete shape. 
Moreover, we introduce $y_j=\kappa_j(1-G_j)(1-F_j)$ $(j=1,\ldots,k)$, which is the limit of $n^{-1}Y_j$.
\begin{theorem}\label{theo:local_alternatives}
	Under Assumption \ref{ass:w} and \eqref{eqn:local_alternatives}, $S_n(w_n)$ converges to a non-central $\chi_f^2(\delta)$-distribution with $f=\text{rank}(\mbf{T})$ and $\delta = (\mbf{T}\mbf{\mu})\trans(\mbf{T}\mbf{\Sigma}\mbf{T})^+\mbf{T}\mbf{\mu}$, where $\mbf\mu=(\mu_1,\ldots,\mu_k)\trans$ and
	\begin{align*}
		\mu_j = \int_0^\infty \widetilde w\{F_0(t)\} \frac{y_1(t)\ldots y_k(t)}{y(t)^{k-1}} \gamma_j(t) \,\mathrm{ d }A_j(t), \quad y=\sum_{j=1}^ky_j.
	\end{align*}
\end{theorem}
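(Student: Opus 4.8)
The plan is to carry out the whole argument directly under the sequence of local alternatives $P_n^{\mathrm{loc}}$ from \eqref{eqn:local_alternatives}, splitting the vector $\mbf Z(w_n)$ into a Gaussian martingale part and a deterministic drift, the latter producing the noncentrality. Under $P_n^{\mathrm{loc}}$ the true cumulative hazard of group $j$ is $A_{nj}$, so $M_{nj}(t)=N_j(t)-\int_0^tY_j(s)\alpha_{nj}(s)\,\mathrm d s$ is a martingale. Centring the Nelson--Aalen integral at $A_{nj}$ and inserting $\mathrm d A_{nj}=\mathrm d A_j+n^{-1/2}\gamma_j\,\mathrm d A_j$ gives the decomposition
\begin{align*}
\mbf Z(w_n)=\widetilde{\mbf Z}^{\mathrm{loc}}(w_n)+n^{1/2}\mbf a_n+\mbf g_n,
\end{align*}
with $j$-th entries $\widetilde Z_{nj}^{\mathrm{loc}}(w_n)=n^{1/2}\int_0^\infty w_n\{\mathrm d\widehat A_j-\mathrm d A_{nj}\}$, $(\mbf a_n)_j=\int_0^\infty w_n\,\mathrm d A_j$ and $(\mbf g_n)_j=\int_0^\infty w_n\gamma_j\,\mathrm d A_j$. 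I would then determine the limits of the three pieces and assemble them through Theorem 9.2.2 of \cite{rao:mitra:1971}.

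For the Gaussian piece, up to a term supported on $\{Y_j=0\}$ that is $o_{\mathrm P}(1)$, $\widetilde Z_{nj}^{\mathrm{loc}}(w_n)$ equals the martingale integral $n^{1/2}\int_0^\infty w_n Y_j^{-1}I\{Y_j>0\}\,\mathrm d M_{nj}$, to which I would apply the martingale central limit theorem \citep{Gill1980,ABGK}. Its predictable variation $\int_0^\infty w_n^2(n^{-1}Y_j)^{-1}\alpha_{nj}I\{Y_j>0\}\,\mathrm d t$ converges to $\sigma_j^2(w)=\int_0^\infty w^2 y_j^{-1}\alpha_j\,\mathrm d t$, with $w=\widetilde w\{F_0\}\,y_1\cdots y_k/y^{k-1}$, because $w_n\oPo w$, $n^{-1}Y_j\oPo y_j$ and $\alpha_{nj}\to\alpha_j$; the $O(n^{-1/2})$ perturbation therefore does not affect the limiting variance, which coincides with the null variance of Theorem \ref{theo:null_unc}. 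Distinct groups share no jumps, so the cross covariations vanish, and the jumps being $O_{\mathrm P}(n^{-1/2})$ yields the Lindeberg condition; hence $\widetilde{\mbf Z}^{\mathrm{loc}}(w_n)\oDo N(\mbf 0,\mbf\Sigma)$ with $\mbf\Sigma=\mathrm{diag}\{\sigma_1^2(w),\ldots,\sigma_k^2(w)\}$, and the same reasoning gives $\widehat{\mbf\Sigma}(w_n)\oPo\mbf\Sigma$.

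The decisive algebra is in the drift. Since $\mbf T$ operates on the group index and $\int_0^\infty w_n\,\mathrm d(\cdot)$ on time, the two commute, so the $\ell$-th entry of $\mbf T(n^{1/2}\mbf a_n)$ equals $n^{1/2}\int_0^\infty w_n\,\mathrm d[\mbf T\mbf A]_\ell=0$ by $\mbf T\mbf A=\mbf 0_k$; the only potentially diverging term is annihilated exactly, leaving $\mbf T\mbf Z(w_n)=\mbf T\widetilde{\mbf Z}^{\mathrm{loc}}(w_n)+\mbf T\mbf g_n$. Dominated convergence (using $\int_0^t\gamma_j\alpha_j<\infty$) together with $w_n\oPo w$ gives $\mbf g_n\oPo\mbf\mu$ with $\mu_j=\int_0^\infty w\,\gamma_j\,\mathrm d A_j$, which reproduces the stated $\mu_j$ after inserting $\gamma_j=\theta_j\gamma\{F_0\}$. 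By Slutsky's lemma $\mbf T\mbf Z(w_n)\oDo N(\mbf T\mbf\mu,\mbf T\mbf\Sigma\mbf T)$. As $\mbf\Sigma$ is positive definite, $\mathrm{range}(\mbf T\mbf\Sigma\mbf T)=\mathrm{range}(\mbf T)$, so $\mbf T\mbf\mu$ lies in the range of the limiting covariance; combining this with $\widehat{\mbf\Sigma}(w_n)\oPo\mbf\Sigma$, Theorem 9.2.2 of \cite{rao:mitra:1971} delivers $S_n(w_n)\oDo\chi_f^2(\delta)$ with $f=\mathrm{rank}(\mbf T)$ and $\delta=(\mbf T\mbf\mu)\trans(\mbf T\mbf\Sigma\mbf T)^+\mbf T\mbf\mu$.

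I expect the main obstacle to be the justification of the limits under the moving measure $P_n^{\mathrm{loc}}$: both the martingale central limit theorem for a triangular array with $n$-dependent intensities and, more delicately, the convergence $w_n\oPo w$ of the data-dependent weight when the observations are generated under the perturbed hazards. The cleanest way to secure these is to establish mutual contiguity of $P_n^{\mathrm{loc}}$ and the baseline null sequence through an LAN-type expansion of the log-likelihood ratio; contiguity then transfers $w_n\oPo w$ and the null convergence of Theorem \ref{theo:null_unc} to $P_n^{\mathrm{loc}}$, and Le Cam's third lemma reproduces the shift $\mbf T\mbf\mu$ directly as an asymptotic covariance, giving an alternative route to the same $\chi^2_f(\delta)$ limit.
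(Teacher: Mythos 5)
Your proposal is correct and takes essentially the same route as the paper: both argue directly under the moving measures, decompose the statistic into a martingale part handled by Rebolledo's central limit theorem (Lindeberg via the $n^{-1/2}K$ bound on the integrand) plus a drift $\int_0^\infty w_n\gamma_j\,\mathrm{d}A_j\to\mu_j$, and conclude through Theorem 9.2.2 of \cite{rao:mitra:1971}; the two obstacles you flag at the end are resolved in the paper without contiguity, namely Lemma \ref{lem:unif_conv_w} is proved directly under the local alternatives (yielding $w_n\oPo w$), and the convergence of the predictable covariations and of the drift is justified by Gill's dominated-convergence criterion \citep[Prop.~II.5.3]{ABGK} with the integrable envelope $g_\lambda=\mathbf{1}_{[0,\tau)}2K^2\lambda y_j\alpha_j$ obtained from the in-probability bound of \cite{wellner:1978} on $n_j^{-1}Y_j/[\{1-F_j\}\{1-G_j\}]$. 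Your explicit verification that $\mbf{T}$ annihilates the divergent centering term $n^{1/2}\mbf{a}_n$ (since integration over time commutes with $\mbf{T}$ and $\mbf{T}\mbf{A}=\mbf{0}_k$) is a correct and necessary step that the paper leaves implicit.
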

The effect of the weight function on the power under certain local alternatives can be illustrated best for the $k$-sample setting under  \eqref{eqn:def_gammaj}. In this case the non-centrality parameter simplifies to
\begin{align*}
	\delta = \Bigl[ \int_0^\infty \widetilde w\{F_0(t)\} \gamma\{F_0(t)\} \frac{y_1(t)\ldots y_k(t)}{y(t)^{k-1}} \,\mathrm{ d }A_1(t) \Bigr]^2  (\mbf{T}{\mbf\theta})\trans(\mbf{T}\mbf{\Sigma} \mbf{T})^+\mbf{T}{\mbf\theta},
\end{align*}
where $\mbf{\theta} = (\theta_1,\ldots,\theta_k)\trans$ and $\mbf{\Sigma}= \text{diag}\{\sigma^2_1(\widetilde w),\ldots,\sigma^2(\widetilde w)\}$. Consequently, choosing $\widetilde w$ as a multitude of $\gamma$ leads to the highest value for $\delta$ and, consequently, to the highest power of $S_n(w)$. However, the direction $\gamma$ of the departure from the null hypothesis is unknown and again the question arises: how to choose $\widetilde w$? The task of finding the optimal $\widetilde w$ is impossible. The most popular choice is the log-rank test $( \widetilde w \equiv 1)$ which, however, lacks to detect crossing hazard departures. To compensate for that, we follow \cite{ditzhaus:friedrich:2018} and suggest to combine the log-rank weight and a weight for crossing hazard alternatives, e.g. $\widetilde w(x) =1-2x$ $(0\leq x \leq 1)$, into a joint Wald-type statistic. In general, the new approach is not restricted to these two weights and even more than two weights can be combined.

\subsection{Combination of different weights}\label{sec:comb_weights}
Let us start with an arbitrary number of pre-chosen weights $\widetilde w_1,\ldots,\widetilde w_m$ corresponding to alternatives of interest, e.g. proportional, late, early or crossing hazards. Moreover, let $w_{n1},\ldots,w_{nk}$ be the corresponding integrands of the form \eqref{eqn:def_wn} for the Nelson--Aalen-type integrals. To exclude redundant cases, as too similar weights or even $\widetilde w_r = \widetilde w_{r'}$ for $r\neq r'$, we follow the suggestion of  \cite{ditzhaus:friedrich:2018} and \cite{ditzhaus:pauly:2019} and restrict to weights fulfilling
\begin{assumption}\label{ass:lin_ind}
	Let $\widetilde w_1,\ldots, \widetilde w_m\in \mathcal W$ be linearly independent, nontrivial polynomials.
\end{assumption}
The basic idea is now to combine $\mbf{Z}_n(w_{n1}),\ldots,\mbf{Z}_n(w_{nm})$ into one joint Wald-type statistic. For this purpose, we introduce the block diagonal matrix 
$\mbf{T}^{(m)} = 
\text{diag}(\mbf{T},\ldots,\mbf{T})\in\R^{km \times km}$. Since $Z_{nj}(w_{nr})$ and $Z_{nj}(w_{nr'})$ are highly dependent, the vectors $\mbf{Z}_n(w_{nr})$ and $\mbf{Z}_n(w_{nr'})$ are so as well. Thus, the covariance matrix estimator required for the joint Wald-type statistic is not a simple diagonal matrix as in \eqref{eqn:def_WTS_one+weight}. In fact, the updated estimator has a block matrix representation $\mbf{\widehat \Sigma} = ( \mbf{\widehat{\Sigma}}^{(rr')})_{r,r'=1,\ldots,m}$, where each submatrix $\mbf{\widehat\Sigma}^{(rr')} = \text{diag}(\widehat\sigma^{2,(rr')}_1, \ldots,\widehat\sigma^{2,(rr')}_k)$ is a diagonal matrix with entires
\begin{align*}
	\widehat\sigma^{2,(rr')}_j = n\int_0^\infty  \frac{w_{nr}(t)w_{nr'}(t)}{Y_j(t)} \,\mathrm{ d }\widehat A_j(t).
\end{align*}
To sum up, we obtain the following updated Wald-type statistic:
\begin{align}\label{eqn:Sn_w1...wm}
	S_n = (\mbf{T}^{(m)}\mbf{Z}_n)\trans(\mbf{T}^{(m)}\mbf{\widehat \Sigma}\mbf{T}^{(m)})^+\mbf{T}^{(m)}\mbf{Z}_n,\quad \mbf{Z}_n = \{\mbf{Z}_{n}(w_{n1})\trans, \ldots, \mbf{Z}_{n}(w_{nk})\trans\}\trans.
\end{align}
\begin{theorem}\label{theo:several_weights}
	Under Assumption \ref{ass:lin_ind} and $\mathcal H_0:\mbf{T}\mbf{A}=\mbf{0}_k$, $S_n$ tends to a $\chi^2_f$-distribution with $f=m\cdot\text{rank}({T})$ degrees of freedom as $n\to\infty$. 
\end{theorem}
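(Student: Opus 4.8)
The plan is to run through exactly the steps behind Corollary~\ref{cor:WTS_null_one+weight}, but with the stacked $km$-dimensional vector $\mbf Z_n$ in place of a single $\mbf Z_n(w_n)$, and then to pin down the degrees of freedom. First I would pass to the centred vector. Let $\widetilde{\mbf Z}_n$ be obtained from $\mbf Z_n$ by replacing each $Z_{nj}(w_{nr})$ with its centred version $\widetilde Z_{nj}(w_{nr})$. Because every integrand $w_{nr}$ is a pooled quantity that does not depend on the group index $j$, the difference $\mbf Z_n(w_{nr})-\widetilde{\mbf Z}_n(w_{nr})$ equals $n^{1/2}\int_0^\infty w_{nr}\,\mathrm d\mbf A$, and hence $\mbf T\,\{\mbf Z_n(w_{nr})-\widetilde{\mbf Z}_n(w_{nr})\}=n^{1/2}\int_0^\infty w_{nr}\,\mathrm d(\mbf T\mbf A)=\mbf 0_k$ under $\mathcal H_0:\mbf T\mbf A=\mbf 0_k$. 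Thus $\mbf T^{(m)}\mbf Z_n=\mbf T^{(m)}\widetilde{\mbf Z}_n$ exactly, and it suffices to study the centred vector, whose limit law will be genuinely centred (giving a central $\chi^2$).

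Second, I would establish joint asymptotic normality $\widetilde{\mbf Z}_n\oDo N(\mbf 0,\mbf\Sigma)$. Each $\widetilde Z_{nj}(w_{nr})$ is, up to asymptotically negligible remainders, a stochastic integral of a predictable integrand against the counting-process martingale of group $j$; within a fixed group the several weights integrate against the \emph{same} martingale, so the vector of these integrals is jointly asymptotically normal by Rebolledo's martingale central limit theorem, and across groups independence from \eqref{eqn:model} gives block structure. Applying the Cramér--Wold device to an arbitrary linear combination, which is again such a martingale integral, upgrades the marginal statement of Theorem~\ref{theo:null_unc} to the full vector. The limiting covariance is precisely the target of $\widehat{\mbf\Sigma}$: entries for $j\neq j'$ vanish, while the within-group entry for weights $r,r'$ is $\sigma_j^{2,(rr')}=\int_0^\infty \widetilde w_r\{F_0\}\widetilde w_{r'}\{F_0\}\,y_j^{-1}(y_1\cdots y_k/y^{k-1})^2\,\mathrm dA_j$, the limit of $\widehat\sigma_j^{2,(rr')}$. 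The consistency $\widehat{\mbf\Sigma}\oPo\mbf\Sigma$ then follows entrywise from $\widehat F_n\to F_0$ and $n^{-1}Y_j\to y_j$ exactly as in Theorem~\ref{theo:null_unc}, the cross terms $r\neq r'$ requiring only the same optional-variation bounds as the diagonal case $r=r'$.

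With $\mbf T^{(m)}\widetilde{\mbf Z}_n\oDo N(\mbf 0,\mbf T^{(m)}\mbf\Sigma\mbf T^{(m)})$ and $\mbf T^{(m)}\widehat{\mbf\Sigma}\mbf T^{(m)}\oPo\mbf T^{(m)}\mbf\Sigma\mbf T^{(m)}$, Theorem~9.2.2 of \cite{rao:mitra:1971} gives $S_n\oDo\chi^2_f$ with $f=\text{rank}(\mbf T^{(m)}\mbf\Sigma\mbf T^{(m)})$, just as in Corollary~\ref{cor:WTS_null_one+weight}. The only real work, and where Assumption~\ref{ass:lin_ind} is indispensable, is computing this rank. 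Reordering the $km$ coordinates from $(r,j)$ to $(j,r)$ makes $\mbf\Sigma$ block diagonal with blocks $\mbf\Sigma_j=(\sigma_j^{2,(rr')})_{r,r'=1}^m$, and each $\mbf\Sigma_j$ is the Gram matrix of the functions $t\mapsto\widetilde w_r\{F_0(t)\}$ in $L^2(\mu_j)$, where $\mathrm d\mu_j=y_j^{-1}(y_1\cdots y_k/y^{k-1})^2\,\mathrm dA_j$. On the support of $\mu_j$ the pooled limit $F_0$ is strictly increasing (the assumption $F_j(t)>0,\ G_j(t)<1$ guarantees a nondegenerate range), so for any nonzero coefficient vector the polynomial $\sum_r c_r\widetilde w_r$, having only finitely many zeros, satisfies $\sum_r c_r\widetilde w_r\{F_0\}\neq 0$ outside a $\mu_j$-null set; linear independence of the $\widetilde w_r$ therefore forces $\mbf\Sigma_j\succ 0$, and $\mbf\Sigma$ is nonsingular. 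Since $\mbf T^{(m)}$ is a symmetric projection and $\mbf\Sigma\succ0$, one has $\ker(\mbf T^{(m)}\mbf\Sigma\mbf T^{(m)})=\ker(\mbf T^{(m)})$, whence $f=\text{rank}(\mbf T^{(m)})=m\cdot\text{rank}(\mbf T)$, the last equality because $\mbf T^{(m)}$ is block diagonal with $m$ copies of $\mbf T$. Positive definiteness of $\mbf\Sigma$ also keeps the rank of $\mbf T^{(m)}\widehat{\mbf\Sigma}\mbf T^{(m)}$ equal to $f$ with probability tending to one, which is what makes the Moore--Penrose inverse in $S_n$ behave continuously and legitimises the appeal to \cite{rao:mitra:1971}.
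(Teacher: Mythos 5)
Your proposal is correct and follows essentially the same route as the paper's own proof: a multivariate martingale CLT (Rebolledo) for the stacked centred vector, entrywise consistency of $\mbf{\widehat\Sigma}$ including the cross-weight terms, nonsingularity of $\mbf\Sigma$ via the finitely-many-zeros argument for nonzero polynomial combinations composed with $F_0$ (this is the content of the paper's Theorem~\ref{theo:null_unc_general} and Lemma~\ref{lem:sig_cons_suppl}), and then Theorem~9.2.2 of \cite{rao:mitra:1971} with continuity of the Moore--Penrose inverse. Your explicit observations that $\mbf T^{(m)}\mbf Z_n=\mbf T^{(m)}\widetilde{\mbf Z}_n$ under $\mathcal H_0$ and that $\ker(\mbf T^{(m)}\mbf\Sigma\mbf T^{(m)})=\ker(\mbf T^{(m)})$, giving $f=m\cdot\text{rank}(\mbf T)$, merely spell out steps the paper leaves implicit.
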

By Theorem \ref{theo:several_weights}, an asymptotically exact test $\phi_n = I\{S_n> \chi^2_{f,\alpha}\}$, i.e., $\E_{\mathcal H_0}(\phi_n)\to \alpha$ as $n\to\infty$, is derived by comparing the joint Wald-type statistic $S_n$ with the $(1-\alpha)$-quantile $\chi^2_{f,\alpha}$ of the $\chi^2_f$-distribution. However, simulation results from Section \ref{sec:simu} reveal a very conservative behavior of $\phi_n$ under small sample sizes. To tackle this problem, we suggest a permutation strategy leading to a better finite sample performance as can be seen in Section \ref{sec:simu}.

\section{Permutation test}\label{sec:permutation}
Resampling methods and, in particular, permutation procedures are well accepted tools to 
improve the finite sample performance of asymptotic tests. The advantage of permuting over other resampling methods, e.g., bootstrap procedures, is the finite sample exactness of the test under exchangeable data, i.e. under the restrictive null hypothesis $\mathcal{\widetilde H}_0: A_1=\ldots=A_k,G_1=\ldots=G_k$ in our scenario. At the same time, 
the asymptotic exactness of the test beyond exchangeability can often be transferred to its permutation counterpart when working with studentized statistics, as the present joint Wald-type statistic. That is why we promote the following permutation strategy for our setting:  
To obtain a permutation sample $\{(X_{ji}^\pi,\delta_{ji}^\pi):j=1,\ldots,k;i=1,\ldots,n_j\}$, we randomly interchange the group memberships of the observation pairs $(X_{ji},\delta_{ji})$. 
With this we calculate the permutation version of the joint Wald-type statistic $S_n^\pi = S_n((X_{ji}^\pi,\delta_{ji}^\pi)_{j,i})$.
\begin{theorem}\label{theo:perm} 
	Under Assumption \ref{ass:lin_ind}, the permutation counterpart $S_n^\pi$ of $S_n$ always asymptotically mimics its null distribution, i.e. under $\mathcal H_0$ as well as under fixed and local alternatives \eqref{eqn:local_alternatives} we have as $n\to\infty$
	\begin{align*}
		\sup_{t\in[0,\infty)}  | \P \{ S_n^\pi \leq t \mid (X_{ji},\delta_{ji})_{ji} \} - \chi^2_{m\cdot\text{rank}(\mbf{T})}(-\infty,x] | \xrightarrow{p} 0.
	\end{align*}
\end{theorem}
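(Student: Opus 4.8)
The plan is to reduce the claim to two ingredients---a conditional central limit theorem for the permuted Nelson--Aalen vector and a conditional consistency statement for the permuted covariance estimator---and then to combine them through a continuous-mapping argument for the quadratic form. All convergence statements are understood conditionally on the data $(X_{ji},\delta_{ji})_{ji}$, in probability, so that randomness stems solely from the permutation $\pi$. The guiding principle is that permutation destroys the group structure: each permuted group behaves like a simple random subsample of size $n_j$ drawn from the pooled sample, whose empirical law converges to the pooled limit irrespective of whether the true groups agree. This is precisely why the permutation law mimics the same $\chi^2$ limit under $\mathcal H_0$ as well as under fixed and local alternatives, the latter's $O(n^{-1/2})$ contamination being washed out in the pooled limit.

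First I would exploit a crucial invariance: the integrand $w_{nr}$ in \eqref{eqn:def_wn} depends on the group labels only through $Y_1,\ldots,Y_k$, since the pooled quantities $\widehat F_n$ and $Y$ are permutation-invariant. A permutation law of large numbers for the risk processes then yields $\sup_{t\geq 0}|n^{-1}Y_j^\pi(t)-\kappa_j\bar y(t)|\oPo 0$, where $\bar y=\lim n^{-1}Y$, so that the permuted integrands $w_{nr}^\pi$ converge uniformly to deterministic limits $w_{r,0}(t)=\widetilde w_r\{F_0(t)\}(\prod_{l}\kappa_l)\bar y(t)$ common to all groups. Replacing $w_{nr}^\pi$ by $w_{r,0}$ at a negligible cost, the centered integrals $\widetilde Z_{nj}^\pi(w_{nr})=n^{1/2}\int_0^\infty w_{nr}^\pi\{\mathrm d\widehat A_j^\pi-\mathrm d\widehat A\}$ admit an asymptotically linear representation as sums of fixed, data-determined scores weighted by the permuted label indicators, i.e.\ as multivariate linear rank statistics in $\pi$. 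The centering by the pooled Nelson--Aalen estimator $\widehat A$ reflects the conditional mean of each permuted group and is common across $j$; because $\mbf{T}\mbf{1}_k=\mbf{0}_k$, this common center is annihilated by $\mbf{T}^{(m)}$, so working with the uncentered $\mbf{T}^{(m)}\mbf{Z}_n^\pi$ of \eqref{eqn:Sn_w1...wm} is legitimate.

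With this linearization in hand, I would invoke a multivariate combinatorial (permutation) central limit theorem for sampling without replacement, adapting the conditional limit theorems for censored-data rank statistics of \cite{neuhaus:1993} and \cite{JanssenMayer2001} to the present $k$-sample, multi-weight, vector-valued setting. Verifying a conditional Lindeberg--Feller condition is feasible because the weights $\widetilde w_r$ are bounded polynomials and the pooled hazard integrals converge, so that the scores have uniformly controlled conditional variances. This delivers $\mbf{T}^{(m)}\mbf{Z}_n^\pi\oDo N(\mbf{0},\mbf{T}^{(m)}\mbf{\Sigma}_0\mbf{T}^{(m)})$ conditionally in probability, where $\mbf{\Sigma}_0$ is the pooled block covariance with diagonal entries $\int_0^\infty w_{r,0}w_{r',0}/(\kappa_j\bar y)\,\mathrm dA_0$ and $A_0$ the pooled limiting hazard. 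In parallel, a permutation law of large numbers for the integral defining $\widehat\sigma^{2,(rr')}_{j,\pi}=n\int_0^\infty w_{nr}^\pi w_{nr'}^\pi/Y_j^\pi\,\mathrm d\widehat A_j^\pi$, again using the uniform integrand convergence and $Y_j^\pi\approx n\kappa_j\bar y$, shows $\mbf{\widehat\Sigma}^\pi\oPo\mbf{\Sigma}_0$, so the studentizing matrix converges to exactly the covariance appearing in the conditional CLT.

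It remains to combine the two limits. Since $\mbf{\Sigma}_0$ is positive definite, the rank of $\mbf{T}^{(m)}\mbf{\widehat\Sigma}^\pi\mbf{T}^{(m)}$ stabilizes at $f=m\cdot\text{rank}(\mbf{T})$ with conditional probability tending to one, and the Moore--Penrose inverse is continuous on matrices of that fixed rank. A conditional Slutsky and continuous-mapping argument then gives $S_n^\pi\oDo\chi^2_f$ conditionally in probability; the passage from this pointwise convergence of conditional distribution functions to the stated supremum is automatic because the limit $\chi^2_f$ has a continuous distribution function (Pólya's theorem). As none of these limits depend on the true group distributions, only on the pooled limit, the conclusion holds uniformly across $\mathcal H_0$, fixed alternatives, and the local alternatives \eqref{eqn:local_alternatives}. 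I expect the main obstacle to be the conditional central limit theorem of the third paragraph: establishing the asymptotic linear representation of the centered permuted Nelson--Aalen integrals while simultaneously controlling the ratio structure $1/Y_j^\pi$ and the random integrand $w_{nr}^\pi$, and verifying the conditional Lindeberg condition uniformly over the data-dependent scores. This is exactly where the two-sample machinery of the cited references must be extended to the vector-valued factorial setting.
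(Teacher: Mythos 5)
Your skeleton matches the paper's proof in several essentials: you condition on the data, center the permuted Nelson--Aalen integrals at the pooled estimator $\widehat A$ and use $\mbf{T}\mbf{1}_k=\mbf{0}_k$ to pass between $\mbf{T}^{(m)}\mbf{Z}_n^\pi$ and the centered vector, you use the uniform convergence $Y_j^\pi/Y\to\kappa_j$ from \cite{neuhaus:1993}, and you close with consistency of $\widehat{\mbf\Sigma}^\pi$, rank stabilization, and continuous mapping. Technically, though, the paper does not linearize into rank statistics and invoke a combinatorial CLT: it builds a discrete-time martingale difference array with respect to the filtration $\mathcal F_{n,i}=\sigma(c_{(1)}^\pi,\ldots,c_{(i)}^\pi)$ generated by sequentially revealed permuted labels, under which $w_{nr}^\pi(X_{(i)})$ and $Y_j^\pi(X_{(i)})$ are predictable and $\E\{\Delta N_j^\pi(X_{(i)})\mid\mathcal F_{n,i-1}\}=\delta_{(i)}Y_j^\pi(X_{(i)})/Y(X_{(i)})$. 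This makes the ratio structure $1/Y_j^\pi$ harmless (it sits inside a predictable integrand, uniformly bounded by $n^{-1/2}K$) and dissolves exactly the obstacle you flag as the hardest step of your plan; the whole proof then reduces to computing the predictable covariation, as in the unconditional case.

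There is, however, a genuine gap in your covariance-matching step. Sampling without replacement makes the permuted group integrals negatively dependent \emph{across} groups: the conditional CLT limit of the centered vector is not block-diagonal but has blocks $\bar{\mbf\Sigma}^{\pi,(rr')}=\psi_{rr'}(\mbf{D}-\mbf{1}_k\mbf{1}_k\trans)$ with $\mbf{D}=\text{diag}(\kappa_1^{-1},\ldots,\kappa_k^{-1})$ and $\psi_{rr'}=\int_0^\infty w_r^\pi w_{r'}^\pi/y^2\,\mathrm{d}\nu$, whereas your studentizer $\widehat{\mbf\Sigma}^\pi$ converges to the block-diagonal $\mbf\Sigma^{\pi,(rr')}=\psi_{rr'}\mbf{D}$ (your $\mbf\Sigma_0$). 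So your assertion that "the studentizing matrix converges to exactly the covariance appearing in the conditional CLT" is false at the vector level, and the claimed limit $N(\mbf 0,\mbf{T}^{(m)}\mbf\Sigma_0\mbf{T}^{(m)})$ for $\mbf{T}^{(m)}\mbf{Z}_n^\pi$ is numerically right only because the rank-one finite-population correction $-\psi_{rr'}\mbf{1}_k\mbf{1}_k\trans$ is annihilated by the projection: $\mbf{T}\mbf\Sigma^{\pi,(rr')}=\mbf{T}\bar{\mbf\Sigma}^{\pi,(rr')}$ since $\mbf{T}\mbf{1}_k=\mbf{0}_k$. The paper makes this cancellation explicit, and it is the linchpin of why the studentized permutation statistic has the correct $\chi^2_f$ limit; had you carried out the combinatorial CLT you invoke, the apparent mismatch between $\bar{\mbf\Sigma}^\pi$ and $\widehat{\mbf\Sigma}^\pi$ would surface and your argument, as written, could not resolve it. A smaller omission in the same spirit: the positive definiteness of $\mbf\Sigma_0$ (needed for the rank of $\mbf{T}^{(m)}\widehat{\mbf\Sigma}^\pi\mbf{T}^{(m)}$ to stabilize at $f=m\cdot\text{rank}(\mbf{T})$) is not automatic and is proved in the paper via the linear-independence Assumption \ref{ass:lin_ind}, exactly as in the unconditional case.
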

The theorem allows to compare the joint Wald-type statistic $S_n$ with the $(1-\alpha)$-quantile $c_{n,\alpha}^\pi$ of $t\mapsto \P \{ S_n^\pi \leq t \mid (X_{ji},\delta_{ji})_{ji} \}$ (instead of the asymptotic $\chi^2_f$-quantile). 
This results in the permutation test $\phi_n^\pi = I\{S_n> c_{n,\alpha}^\pi\}$, which has the same asymptotic power and type-1 error behavior under $\mathcal H_0$ as well as under fixed and local alternatives \citep[Lemma 1]{janssenPauls2003}.

\section{Simulations}\label{sec:simu}
\begin{figure}
	\centering
	\begin{minipage}{0.47\textwidth}
		\includegraphics[width=\textwidth]{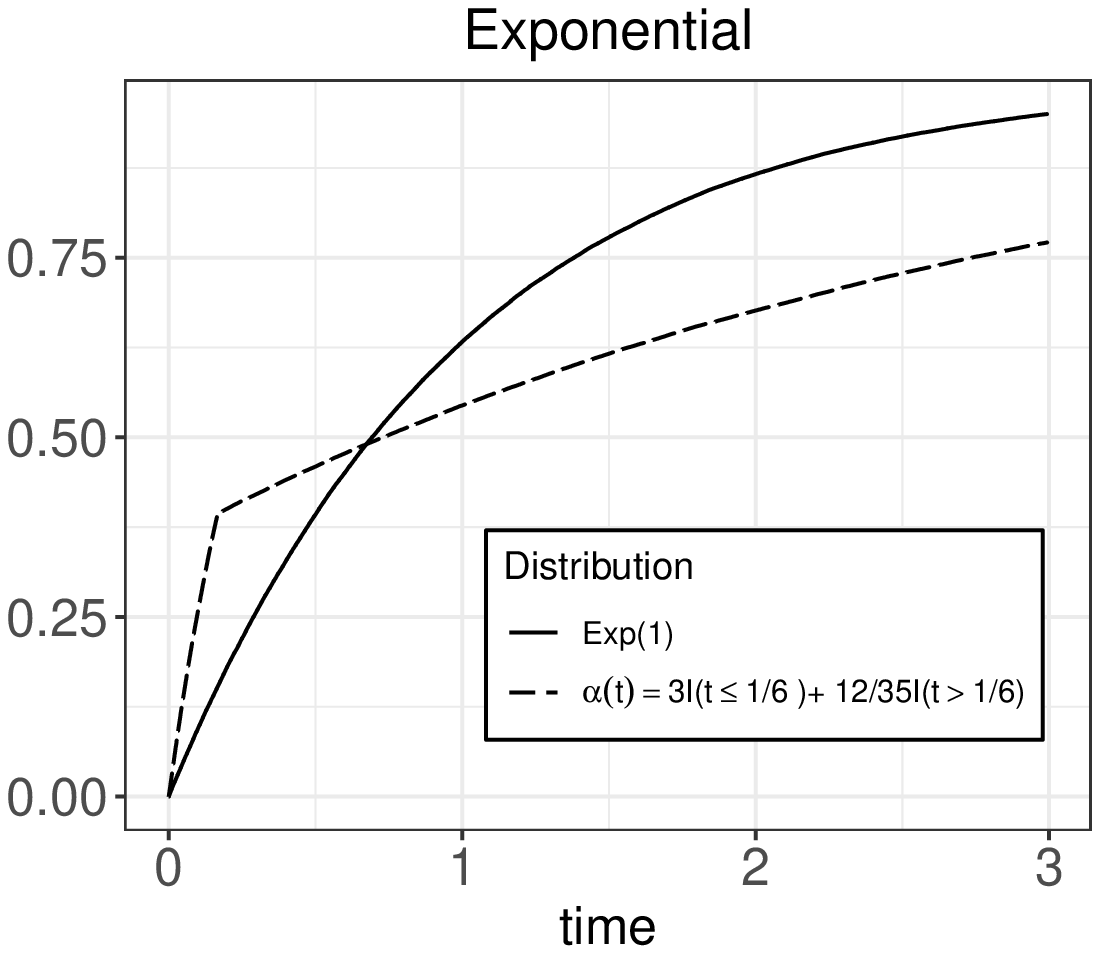}
	\end{minipage}
	\begin{minipage}{0.47\textwidth}
		\includegraphics[width=\textwidth]{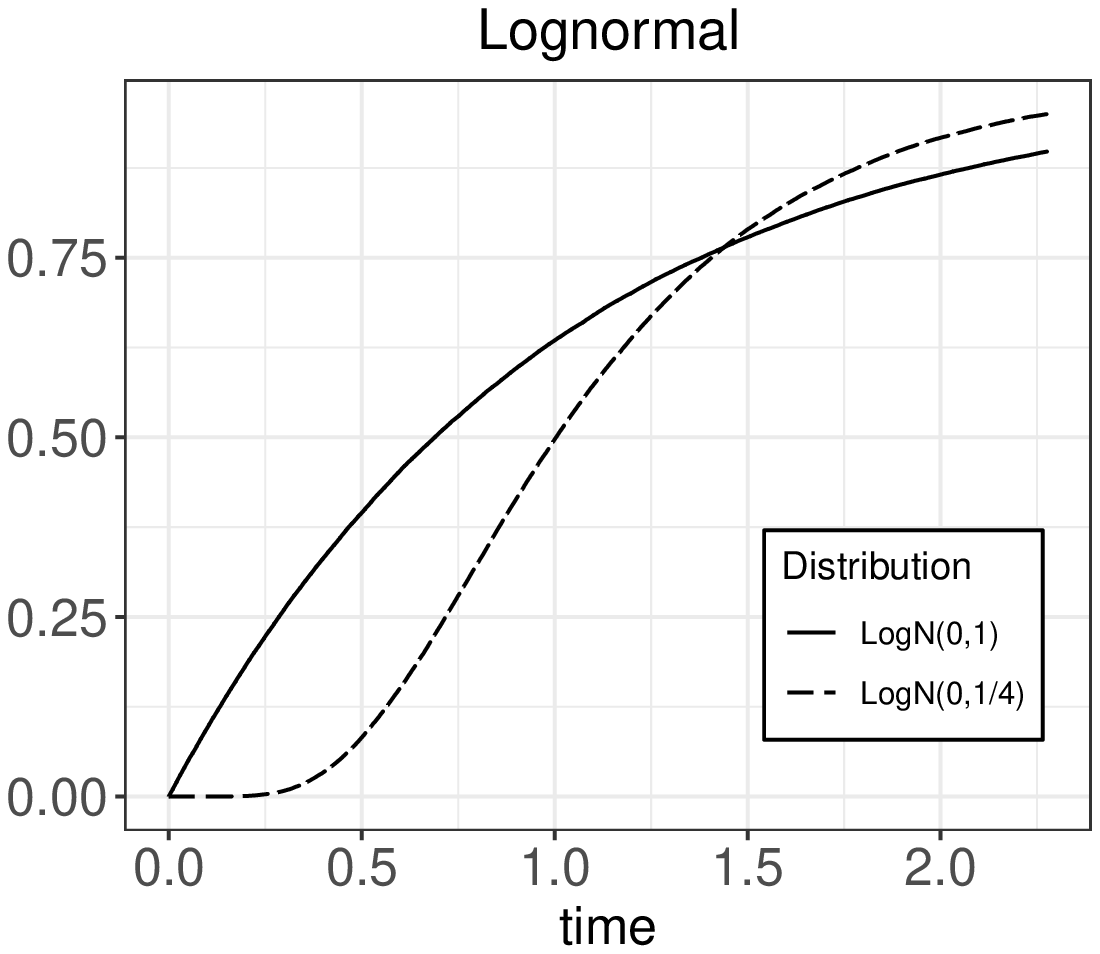}
	\end{minipage}
	\caption{Crossing curve alternatives}\label{fig:cross}
\end{figure}

In addition to the asymptotic findings, we conduct a simulation study to examine the small sample performance of the tests. In regard to our data example, we consider a $2\times 3$-design leading to $k=6$ subgroups and perform the tests for no main effect $\mathcal H_0:\{\mbf{H}_B \mbf{A} = \mbf{0}_k\}$ and for no interaction effect $\mathcal H_0:\{\mbf{H}_{BC}\mbf{A} = \mbf{0}_k\}$. Some additional simulation results for the one-way layout with $k=6$ are deferred to the supplement. For all these testing problems, we combine the classical log-rank weight $w_1(t)=1$ and a weight $w_2(t) = 1-2t$ $(0 \leq t\leq 1)$ for crossing survival curve alternatives. Under the null hypotheses, the survival times are simulated according to the same distribution in each group, where we consider the standard exponential distribution $\text{Exp}(1)$, a Weibull distribution $\text{Weibull}(1.5,5)$ with parameters $(\lambda_{\text{shape}},\lambda_{\text{scale}})=(1.5,5)$ and a standard log-normal distribution $\text{LogN}(0,1)$. To obtain relevant alternatives, we disturb these null settings by choosing a crossing curve alternative in case of the exponential and lognormal distribution, see Figure~\ref{fig:cross}, and a proportional hazard alternative $\text{Weibull}(1.5,5 (2.5)^{-2/3})$ with hazard ratio equal to $2.5$ for the Weibull distribution. The observations of the first subgroup, $(j_B, j_C)=(1,1)$, for testing of no interaction effect and of the first two subgroups $(j_B, j_C)=(1,1), (1,2)$ for testing of no main effect $B$ are generated according to these alternative distributions while the remaining observations follow the respective null distribution. The censoring times are simulated by uniform distributions $\text{Unif}[0,U_j]$. The upper limit $U_j$ of the interval in group $j$ is determined by a Monte-Carlo simulation such that the average censoring rate $\P( T_{j1} > C_{j1}) = \int_0^{\infty} \min\{x/U_j,1\} \,\mathrm{ d }F_j(x)$ equals a pre-chosen rate $\text{cens}_j$. To cover low, medium and high censoring rates, we consider three different scenarios: $\mbf{\text{cens}}=(7\%, 6\%, 0\%,6\%, 7\%, 0\% )$ (low), $\mbf{\text{cens}} = (20\%,30\%,25\%,35\%,30\%,20\%)$ (med) and $\mbf{\text{cens}} = (20\%,50\%,50\%,20\%,50\%,20\%)$ (high). Thus, the censoring distributions are different implying that the pooled observation pairs ${(X_{ji},\delta_{ji})}_{j,i}$ are non exchangeable. In fact, simulating exchangeable situations would be of less interest as the permutation tests would already be exact testing procedures. In addition, we discuss balanced sample size settings $\mbf{n_1} =(n_{11},\dots,n_{23}) = (8,\ldots,8)$ and $2\cdot \mbf{n_1}$ as well as unbalanced scenarios $\mbf{n_2} =(n_{11},\dots,n_{23})= (15,9,5,9,7,6)$ and $2\cdot\mbf{n_2}$, respectively. The simulations are conducted by means of the computing environment R \citep{R}, version 3.6.2. For each setting, $N_{\text{sim}}= 5000$ simulation runs and $N_{\text{perm}}=1999$ permutation iterations were generated. 

\begin{table}
	\centering
	\caption{Type-1 error rates in $\%$ (nominal level $\alpha = 5\%$) for testing of no main and no interaction effect in the $2\times 3$-layout, respectively, using the joint Wald-type test based on the $\chi^2_f$-approximation (Asy) and the permutation approach (Per). 
		Values inside the $95\%$ binomial interval $[4.4,5.6]$ are printed bold}\label{tab:null_inter}
	\begin{tabular}{lllllllll}
		\multicolumn{3}{c}{}&\multicolumn{2}{c}{low cens.} & \multicolumn{2}{c}{med. cens.} & \multicolumn{2}{c}{high cens.}  \\
		Effect & Distr & $\mbf{n}$ & {Asy} & {Per} & {Asy} & {Per} & {Asy} & {Per}\\
		\hline
		\hline		
		Main & Weib & $\mbf{n}^{(1)}$ & 3.3 & 4.3 & 3.5 & $\mbf{5.4}$& 2.7 & 4.3 \\
		&& $2\mbf{n}^{(1)}$ & $\mbf{5.0}$ & $\mbf{5.2}$ & 4.1 & $\mbf{4.9}$ & 4.1 & $\mbf{4.9}$ \\
		&& $\mbf{n_2}$ & 3.4 & $\mbf{4.9}$ & 2.8 & $\mbf{4.9}$ & 2.4 & $\mbf{4.5}$ \\ 
		&& $2\mbf{n_2}$ & $\mbf{4.6}$ & $\mbf{5.1}$ & 4.3 & $\mbf{5.2}$ & $\mbf{4.5}$ & $\mbf{5.5}$ \\
		& Exp & $\mbf{n}^{(1)}$ & 3.4 & $\mbf{4.7}$ & 3.9 & $\mbf{5.1}$ & 3.2 & $\mbf{4.6}$\\
		&& $2\mbf{n}^{(1)}$ & $\mbf{4.4}$ & $\mbf{4.8}$ & $\mbf{4.4}$ & $\mbf{5.2}$ & $\mbf{4.9}$ & $\mbf{5.3}$  \\
		&& $\mbf{n_2}$ & 3.3 & $\mbf{4.9}$ & 3.0 & $\mbf{4.8}$ & 2.6 & $\mbf{4.7}$ \\ 
		&& $2\mbf{n_2}$ & 4.3 & $\mbf{5.0}$ & 4.1 & $\mbf{4.9}$ & 3.5 & $\mbf{4.7}$ \\
		& logN & $\mbf{n}^{(1)}$ & 3.5 & $\mbf{4.5}$ & 3.7 & $\mbf{5.1}$ & 3.3 & $\mbf{5.3}$\\
		&& $2\mbf{n}^{(1)}$ & $\mbf{4.7}$ & $\mbf{4.6}$ & 4.3 & $\mbf{4.7}$ & 4.2 & $\mbf{4.8}$\\
		&& $\mbf{n_2}$ & 3.0 & $\mbf{4.7}$ & 3.0 & $\mbf{4.5}$ & 2.8 & $\mbf{4.7}$\\ 
		&& $2\mbf{n_2}$ &  4.3 & $\mbf{5.1}$ & 4.1 & $\mbf{4.9}$ & 3.9 & $\mbf{4.7}$\\
		Interaction & Weib & $\mbf{n}^{(1)}$ & 2.0 & $\mbf{4.9}$ & 1.6 & $\mbf{5.0}$ & 1.1 & $\mbf{4.9}$ \\
		&& $2\mbf{n}^{(1)}$ &  3.4 & $\mbf{4.7}$ & 3.0 & $\mbf{4.6}$ & 2.3 & 4.3\\
		&& $\mbf{n_2}$ & 1.6 & 4.3 & 1.8 & $\mbf{5.0}$ & 1.5 & $\mbf{5.3}$ \\ 
		&& $2\mbf{n_2}$ & 3.4 & $\mbf{4.8}$ & 3.0 & $\mbf{5.0}$ & 2.5 & $\mbf{4.5}$ \\
		& Exp & $\mbf{n}^{(1)}$ & 1.9 & $\mbf{5.3}$ & 1.7 & $\mbf{5.0}$ & 1.6 & $\mbf{5.5}$\\
		&& $2\mbf{n}^{(1)}$ & 3.3 & $\mbf{5.1}$ & 3.3 & $\mbf{5.0}$ & 3.0 & $\mbf{5.1}$ \\
		&& $\mbf{n_2}$ & 1.7 & $\mbf{4.4}$ & 1.8 & $\mbf{5.0}$ & 1.7 & $\mbf{5.5}$ \\ 
		&& $2\mbf{n_2}$ & 3.2 & $\mbf{4.9}$ & 2.9 & $\mbf{4.8}$ & 3.3 & $\mbf{5.1}$ \\
		& logN & $\mbf{n}^{(1)}$ & 1.8 & $\mbf{4.8}$ & 1.5 & $\mbf{4.8}$ & 1.2 & $\mbf{4.7}$\\
		&& $2\mbf{n}^{(1)}$ & 3.7 & $\mbf{5.0}$ & 3.4 & $\mbf{5.2}$ & 2.8 & $\mbf{5.2}$\\
		&& $\mbf{n_2}$ & 1.9 & $\mbf{4.8}$ & 1.5 & $\mbf{4.8}$ & 1.8 & $\mbf{5.2}$\\ 
		&& $2\mbf{n_2}$ &  3.3 & $\mbf{5.1}$ & 3.4 & $\mbf{5.0}$ & 2.9 & $\mbf{5.1}$ 	\\
	\end{tabular}
\end{table}
\begin{table}[ht] \caption{Power values in $\%$ (nominal level $\alpha = 5\%$) for testing of no main and no interaction effect under crossing (Cross) and proportional (Prop) hazard alternatives, respectively, using the joint Wald-type test based on the $\chi^2_f$-approximation (Asy), the permutation approach (Perm) and the singly-weighted permutation tests based on $w_1$ (LR) and $w_2$ (Cross), respectively} \label{tab:power}
	\centering
	\begin{tabular}{llllcccccccc}
		\multicolumn{4}{c}{}&\multicolumn{4}{c}{$2n^{(1)}$ (bal.)} & \multicolumn{4}{c}{$2n^{(2)}$ (unbal.)}   \\
		Effect& Distr & Altern & cens & Asy  & Per & LR & Cross & Asy & Per & LR & Cross \\
		\hline
		\hline
		Main & Exp & Cross & low & 54.2 & 55.8 & 4.7 & 36.9 & 55.6 & 58.1 & 4.8 & 36.0 \\ 
		&&& med & 47.6 & 49.3 & 4.3 & 31.6 & 48.9 & 51.9 & 5.5 & 31.6 \\ 
		&&& high & 41.0 & 43.5 & 7.4 & 28.6 & 43.5 & 46.9 & 8.6 & 31.2 \\ 
		& LogN & Cross & low & 55.3 & 56.4 & 15.8 & 64.7 & 46.9 & 49.0 & 15.7 & 55.9 \\
		&    && med & 48.9 & 51.0 & 25.6 & 59.2 & 41.6 & 44.5 & 26.1 & 50.1 \\ 
		&   && high & 44.6 & 48.0 & 37.5 & 55.0 & 37.2 & 40.3 & 36.8 & 45.9 \\ 
		& Weib & Prop & low & 65.1 & 66.6 & 77.1 & 42.2 & 70.9 & 73.2 & 81.4 & 30.0 \\ 
		&    &   & med & 49.9 & 52.5 & 64.4 & 42.0 & 56.6 & 60.4 & 70.0 & 37.5 \\ 
		&    &  & high & 36.0 & 39.4 & 51.4 & 39.2 & 43.3 & 48.0 & 58.4 & 41.8 \\
		Interaction & Exp & Cross &  low & 17.9 & 23.8 & 4.9 & 15.5 & 24.1 & 30.6 & 4.4 & 19.1 \\ 
		&&&  med & 13.6 & 19.1 & 4.6 & 12.5 & 20.0 & 26.3 & 4.6 & 15.8 \\
		&&&  high & 12.1 & 18.1 & 5.7 & 11.9 & 19.1 & 24.8 & 6.6 & 16.8 \\  
		& LogN & Cross &  low & 18.2 & 23.8 & 9.2 & 28.9 & 19.2 & 25.2 & 10.6 & 29.7 \\
		&& &  med & 15.5 & 21.7 & 13.3 & 26.1 & 15.2 & 21.5 & 14.1 & 24.7 \\
		&&&  high & 12.7 & 18.3 & 16.9 & 23.1 & 14.4 & 20.2 & 20.0 & 23.9 \\
		& Weib & Prop &  low & 19.5 & 26.0 & 38.8 & 18.4 & 37.0 & 44.2 & 55.3 & 16.3 \\  
		&&       & med & 14.7 & 20.7 & 30.7 & 19.0 & 28.4 & 36.1 & 44.1 & 20.2 \\ 
		&&      & high & 9.5 & 15.1 & 23.4 & 18.5 & 20.8 & 27.2 & 35.1 & 27.0 \\ 
	\end{tabular}
\end{table}

Table \ref{tab:null_inter} displays the resulting type-1 error rates. It is apparent that the asymptotic tests lead to conservative decisions with values around $3\%$ for testing of no main effect and even around $2\%$ for testing of no interaction effect for both small sample size settings $\mbf{n_1}$ and $\mbf{n_2}$. The type-1 error rates improve when the sample sizes are doubled, but still stay in a rather conservative range. At the same time, the permutation tests exhibit a satisfactory type-1 error control over all different settings. Only in 4 out of 72 scenarios their type-1 error rates are outside the $95\%$ binomial confidence interval $[4.4\%,5.6\%]$. The results for the one-way layout in the supplement show a different picture: while the permutation tests still control the type-1 error rate accurately, the asymptotic test become very liberal with observed type-1 error rates up to $25.9\%$. 

In Table~\ref{tab:power}, we compare the joint Wald-type statistic $S_n$, its permutation counterpart $S_n^\pi$ and the singly-weighted permutation tests $S_n^\pi(w_1)$ (log-rank weight), $S_n^\pi(w_2)$ (crossing weight) in terms of power. From the simulation results we can draw the following two main conclusions: (1) The conservative type-1 error performance of the asymptotic tests negatively affects their power behavior. Here, the differences between the asymptotic and permutation tests' power values are most pronounced for testing of no interaction effect in the unbalanced settings. (2) The joint Wald-type statistic has a reasonable power behavior for both, proportional and crossing curve alternatives, while choosing the wrong singly-weighted test may lead to a significant power loss. For the exponential distribution settings, the joint Wald-type test even outperforms both singly-weighted tests. This observation can be explained by interpreting the Wald-type statistic $S_n$ as a certain projection as explained in \cite{BrendelETAL2014} for the two sample case. The take-home message from this fact is that combining the weights $w_1$ and $w_2$ as well as combining $w_1$ and $w_3=w_2+\lambda w_1$ for some $\lambda\in R$ results in the same statistic $S_n$. While $w_2$ is designed for crossings near to the center, the hazard rates in the exponential setting actually cross at a mid-early time and, thus, another crossing weight, e.g. $w_3 = w_2-0.25$, would be more appropriate. But, as said before, for the combination approach it does not matter whether we choose $w_2$ or $w_3$, the final result is the same. To sum up, the advantages of the combination approach are that we neither need to choose in advance between proportional and crossing hazard alternatives nor between different crossing points. 

Consequently, we recommend to combine different weights, especially the classical log-rank weight and a crossing weight, rather than blindly guessing in advance which kind of alternative is underlying. Moreover, we prefer the permutation test over the asymptotic test for small sample size scenarios due to the unstable type-1 error rate behavior of the latter with too conservative decisions in the $2\times 3$-layout and too liberal decisions in the one-way layout.

\section{Illustrative data analysis}\label{sec:real_data}
For illustration purposes, we re-analyzed the lung cancer study from \cite{Prentice}, which is freely available in the R package survival. It includes the survival times of male lung cancer patients getting either an experimental or a standard treatment. As statistically verified by \cite{KalbfleischPrentice1980}, the tumor type has an effect on the patients' survival. Thus, \cite{ditzhaus:pauly:2019} restricted to the smallcell tumor type to illustrate their two-sample combination approach. The general factorial design set-up allows us to inspect, now, several tumor types simultaneously. In detail, we consider a $2\times 3$-layout with the factors treatment, having the two levels experimental and standard, and tumor type with the three different levels: smallcell, adeno and large. For the experimental treatment, the group-specific sample size $n_j$ and the censoring rates $\text{cens}_j$ are $(n_j,\text{cens}_j) = (18,6\%), (18,6\%),(12,0\%)$ and, for the standard treatment, we have $(n_j,\text{cens}_j)=(30,7\%),(9,0\%),(15,6\%)$, respectively, where the first values correspond to the tumor type smallcell, the second to adeno and the last to large. This situation is comparable with the unbalanced sample size scenario $2\mbf{n_2}$ combined with the low censoring setting. In Figure~\ref{fig:data_exam}, the survival curves of the two treatments are displayed for all the three tumor types. It appears that the experimental treatment has a beneficial effect on the patients' survival time. This was already statistically verified by \cite{ditzhaus:pauly:2019} for the smallcell tumor type and can be confirmed when considering all three tumor types simultaneously by the joint Wald-type test and the singly-weighted test based on $w_1$, see Table~\ref{tab:data_examp_WTS}. For comparison, an Aalen- and a Cox-regression including the different tumor types as dummy variables result in a slightly non-significant $p$-value of $7.8\%$ and a significant $p$-value of $1.9\%$ for the treatment effect, respectively. Contrary to all these results, the singly-weighted test based on $w_2$ has a very high $p$-value not supporting a rejection. These diverse decisions reaffirm our recommendation from the simulation section, namely to combine the weights rather than blindly choosing one weight in advance. 

\begin{figure}
	\begin{minipage}{0.32\textwidth}
		\includegraphics[width=\textwidth]{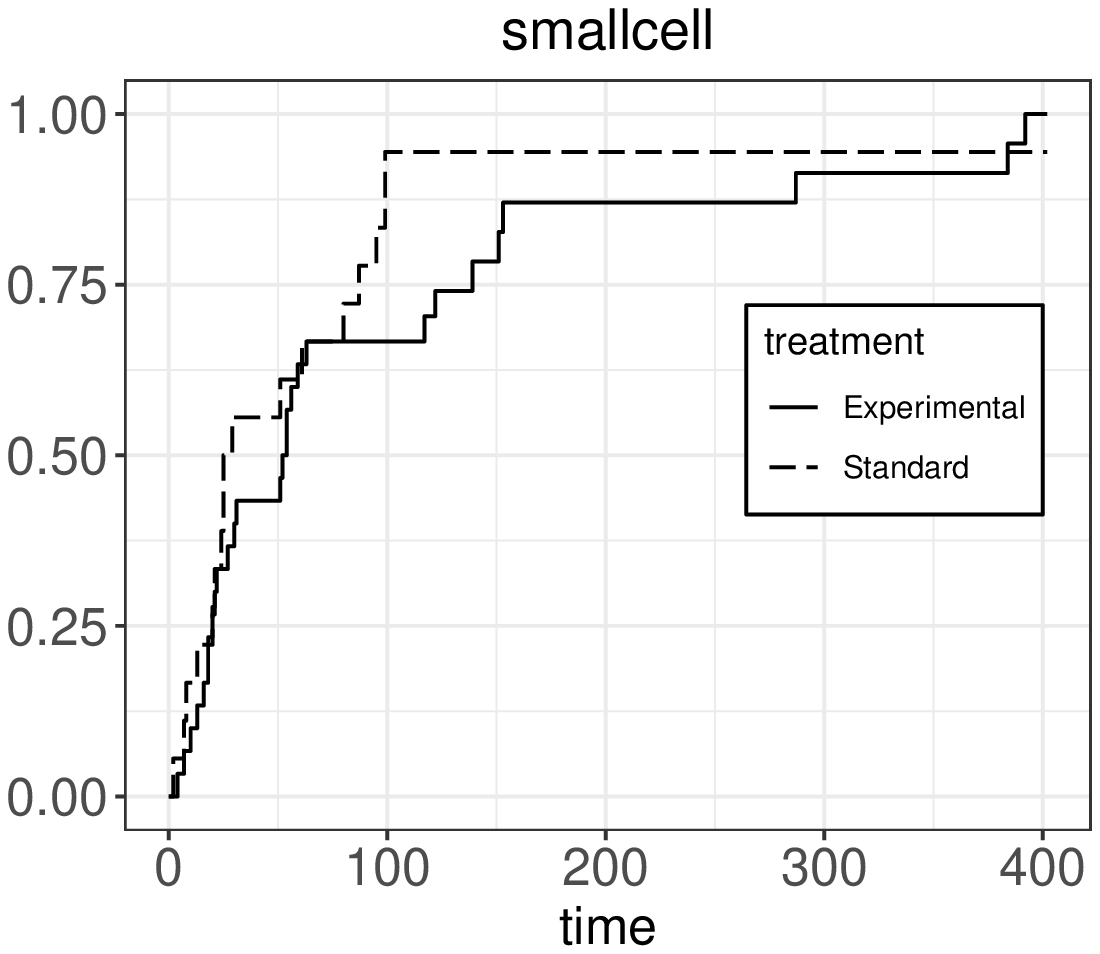}
	\end{minipage}
	\begin{minipage}{0.32\textwidth}
		\includegraphics[width=\textwidth]{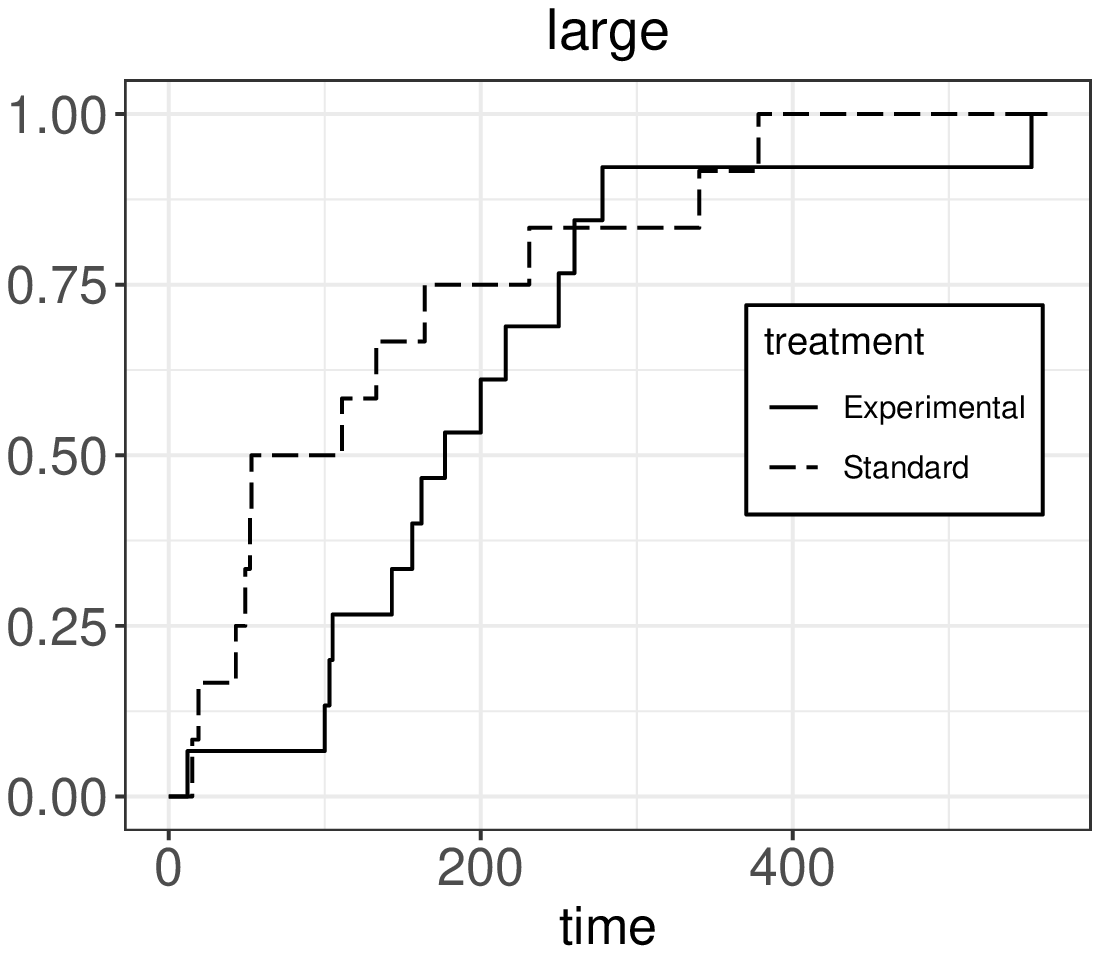}
	\end{minipage}
	\begin{minipage}{0.32\textwidth}
		\includegraphics[width=\textwidth]{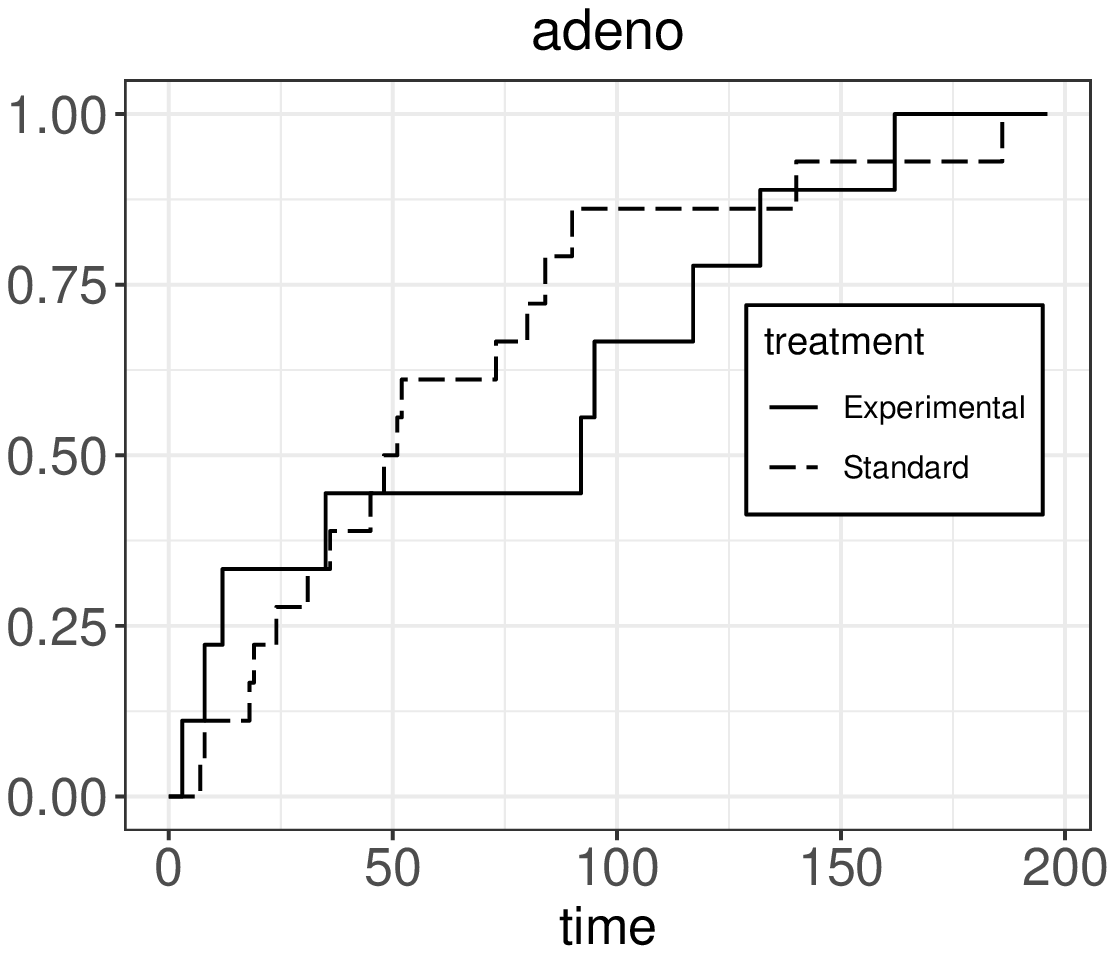}
	\end{minipage}
	\caption{Kaplan--Meier curves for the lung-cancer trial}\label{fig:data_exam}
\end{figure}


\begin{table}[ht] \caption{$P$-values in $\%$ for the lung cancer data example of the joint Wald-type test (Comb) as well as the singly-weighted tests based on $w_1$ (LR) and $w_2$ (Cross)} \label{tab:data_examp_WTS}
	\centering
	\begin{tabular}{lcccccc}
		\multicolumn{1}{c}{} & \multicolumn{3}{c}{Asymptotic} & \multicolumn{3}{c}{Permutation} \\
		& Comb & LR & Cross & Comb & LR & Cross  \\
		\hline \hline
		Treatment & 1.3 & 2.6 & 79.1 & 1.0 & 2.8  & 79.1 \\
		celltype & 0.2 & $<0.1$ & 1.5 & 0.02 & $<0.1$ & 1.4\\
		interaction & 72.7 & 99.0 &	68.4 & 75.0 & 99.2 & 69.1\\
		\hline
	\end{tabular}
\end{table}

%

\section{Outlook}

The proofs for the asymptotic test versions, i.e., Theorems \ref{theo:null_unc} \& \ref{theo:several_weights}, rely on the fact that $N=(N_1,\ldots,N_k)$ fulfills the multiplicative intensity model of \cite{aalen:1978}. More complex filtering mechanisms, e.g., truncation or certain interval censorings, can be endowed in the same methodology \citep[Chapter III]{ABGK} and an extension of the proofs of Theorems~\ref{theo:null_unc}-\ref{theo:several_weights} is straightforward. For such more general survival settings, however, it is unclear whether the permutation technique is still valid. 
That is why multiplier resampling as investigated in \cite{Lin1997,BeyersmannETAL2013,
	dobler2017non,dobler2019confidence} and \cite{bluhmki2018wild,bluhmki2019wildNA} would be our first choice to approximate the asymptotic $\chi_f^2$-quantile in the cases beyond (pure) right-censoring. 

To bring the presented procedures into statistical practice, the authors currently work on two fields: (1) an implementation of the combination test into the R-package GFDsurv, which will be available on CRAN soon. The corresponding R-function is coined CASANOVA abbreviating the presented cumulative Aalen survival analysis-of-variance approach, and (2) on convincing 
medical doctors and epidemiologists to apply the methods in bio-statistical co-operations.

\appendix

\section{Additional simulations}
We run additional simulations for a one-layout comparing $k=6$ groups under the settings from Section \ref{sec:simu}. For the power comparison, we follow the strategy for testing no main effects, i.e. the observations from the first two groups are generated by proportional or crossing hazard alternatives while the remaining observations follow the respective null setting. Table~\ref{tab:null_one-way} displays the results under the null hypothesis $\mathcal H_0: \{\mbf{P}_k\mbf{A} = \mbf{0}_k\}$. While the permutation test controls again the type-1 error rate accurately, the decisions of the asymptotic test are now very liberal with values around $12\%$ for the balanced setting $\mbf{n}_1$ and even around $20\%$ for the unbalanced scenario $\mbf{n}_2$. Doubling the sample size improves the type-1 error rates and they come closer to the $5\%$-benchmark but are still in a rather liberal range with values up to $18.5\%$ in the unbalanced case $2\mbf{n}_2$. Consequently, a comparison of the asymptotic and permutation test in terms of power is unfair. Nevertheless, the power values of both tests as well as of the singly-weighted permutation approaches $S_n^\pi(w_1)$ and $S_n^\pi(w_2)$ are displayed in Table~\ref{tab:power_one-way}. Ignoring the blown-up power values of the asymptotic test, the results support the recommendation from Section~\ref{sec:simu} that the joint Wald-type test combines the strength of both singly-weighted tests and that the crossing point do not need to be chosen in advance.

\begin{table}
	\centering
	\caption{Type-1 error rate in $\%$ ($\alpha = 5\%$) for testing of no group effect in the one-way layout $(k=6)$ using the joint Wald-type test based on the $\chi^2_f$-approximation (Asy) and the permutation approach (Per). The superscript $^*$ highlights values inside the $95\%$ binomial interval $[4.4,5.6]$.}\label{tab:null_one-way}
	\begin{tabular}{ll|ll|ll|ll}
		\multicolumn{2}{c}{}&\multicolumn{2}{c}{low cens.} & \multicolumn{2}{c}{med. cens.} & \multicolumn{2}{c}{high cens.}  \\
		Distr & $\mbf{n}$ & {Asy} & {Per} & {Asy} & {Per} & {Asy} & {Per}\\
		\hline	\hline	
		Exp  & $\mbf{n}_1$  & 11.1 & {4.7}$^*$ & 13.5 & ${5.3}^*$ & 12.7 & {5.0}$^*$ \\ 
		& $2\mbf{n}_1$  & 6.6 & {5.0}$^*$  & 8.3 & {5.5}$^*$ & 10.8 & {4.9}$^*$ \\
		& $\mbf{n}_2$  & 17.4 & {4.7}$^*$ & 23.7 & {4.7}$^*$ & 25.8 & {4.9}$^*$ \\ 
		& $2\mbf{n}_2$  & 8.8 & {4.5}$^*$ & 10.9 & {4.8}$^*$ & 17.2 & 6.4 \\
		\hline
		LogN  & $\mbf{n}_1$  & 10.8 & {4.8}$^*$ & 13.2 & {4.8}$^*$ & 11.6 & ${5.4}^*$ \\ 
		& $2\mbf{n}_1$  & 7.1 & ${5.4}^*$ & 7.5 & {4.6}$^*$ & 11.1 & ${5.3}^*$ \\ 
		& $\mbf{n}_2$  & 18.6 & {4.8}$^*$ & 23.0 & {4.5}$^*$ & 25.9 & 5.8 \\
		& $2\mbf{n}_2$  & 8.7 & {4.7}$^*$ & 11.2 & {4.9}$^*$ & 17.4 & 6.4 \\
		\hline
		Weib & $\mbf{n}_1$  & 12.0 & {5.5}$^*$ & 13.4 & {5.2}$^*$ & 9.1 & {4.7}$^*$ \\ 
		& $2\mbf{n}_1$  & 6.7 & {5.1}$^*$ & 8.8 & {5.2}$^*$ & 11.4 & {5.1}$^*$ \\ 
		& $\mbf{n}_2$  & 17.4 & 4.2 & 24.8 & {4.5}$^*$ & 23.5 & {5.2}$^*$ \\  
		& $2\mbf{n}_2$  & 8.8 & {4.7}$^*$ & 12.0 & {5.5}$^*$ & 18.5 & 6.4 \\ 		
	\end{tabular}
\end{table}

\begin{table}[ht] \caption{Power values in $\%$ ($\alpha = 5\%$) for testing of no group effect in the one-way layout $(k=6)$ under crossing (Cross) and proportional (Prop) hazard alternatives, respectively, using the joint Wald-type test based on the $\chi^2_f$-approximation (Asy), the permutation approach (Per) and the singly-weighted permutation tests based on $w_1$ (LR) and $w_2$ (Cross), respectively} \label{tab:power_one-way}
	\centering
	\begin{tabular}{lll|cccc|cccc}
		\multicolumn{3}{c}{}&\multicolumn{4}{c}{$2\mbf{n}_1$ (bal.)} & \multicolumn{4}{c}{$2\mbf{n}_2$ (unbal.)}   \\
		Distr & Altern & cens & Asy & Per & LR & Cross & Asy & Per & LR & Cross \\
		\hline
		\hline
		Exp & Cross & low & 67.4 & 61.7 & 5.2 & 24.3 & 77.2 & 62.9 & 4.4 & 35.2 \\  
		&& med & 56.4 & 44.8 & 4.5 & 16.8 & 67.8 & 44.5 & 6.6 & 24.6 \\ 
		&& high & 46.3 & 28.6 & 6.4 & 14.0 & 61.6 & 29.6 & 8.5 & 20.5 \\ 
		\hline
		LogN & Cross & low & 81.3 & 76.3 & 19.3 & 86.9 & 76.3 & 59.6 & 20.3 & 74.6 \\ 
		&& med & 81.7 & 72.1 & 31.6 & 80.7 & 77.0 & 43.0 & 33.4 & 57.4 \\ 
		&& high & 83.8 & 61.4 & 48.6 & 72.3 & 74.5 & 19.2 & 47.2 & 31.9 \\
		\hline
		Weib & Cross & low & 47.1 & 40.5 & 64.8 & 31.0 & 75.4 & 59.6 & 83.3 & 29.5 \\ 
		&& med & 40.2 & 29.5 & 47.6 & 25.6 & 65.0 & 40.5 & 66.5 & 29.5 \\  
		&& high & 37.2 & 21.1 & 33.0 & 21.3 & 60.1 & 29.9 & 47.9 & 27.2 \\ 
	\end{tabular}
\end{table}

\section{Preliminaries}
The uniform convergence of the processes $Y_j,N_j$ as well as of the pooled Kaplan--Meier estimator are well known. Since the proofs are short, we present them here for completeness reasons. Define $\tau_j = \inf\{x\geq 0: [1-F_j(x)][1-G_j(t)] = 0\}\in\R \cup\{\infty\}$ $(j=1,\ldots,k)$ and $\tau = \min(\tau_j:j=1,\ldots,k)$.
\begin{lemma}\label{lem:unif_conv_w}
	Define $\nu_j(t) = \kappa_j\int_0^t [1-G_j(s)]\,\mathrm{ d }F_j(s)$ $(j=1,\ldots,k)$, $\nu=\sum_{j=1}^k\nu_j$ and
	\begin{align*}
	F_0(t) = 1 - \exp\Bigl\{ - \sum_{j=1}^k  \int_0^t \frac{1}{y(s)} \,\mathrm{ d }\nu_j(s) \Bigr\}.
	\end{align*}
	Then for every $\tau_0\in(0,\tau)$ we have in probability under the local alternatives \eqref{eqn:local_alternatives} of Section \ref{sec:local}
	\begin{align*}
	\sup_{t\in[0,\tau_0]} |\widehat F_{n}(t) - F_0(t)| + \sup_{t\in[0,\infty)}  | Y_j(t)/n - y_j(t) | + \sup_{t\in[0,\infty)} \Bigl | N_j(t)/n - \nu_j(t)\Bigr|  \to 0.
	\end{align*}	
\end{lemma}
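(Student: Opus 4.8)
The plan is to establish the three uniform limits separately, reducing the first (the pooled Kaplan--Meier term) to the other two via the product-integral representation of $\widehat F_n$. Throughout, convergence is in probability under the triangular array generated by the local alternatives \eqref{eqn:local_alternatives}, so the underlying distributions $F_{nj}$ depend on $n$; the key preliminary observation is that $A_{nj}(t)=A_j(t)+n^{-1/2}\int_0^t\gamma_j\,\mathrm{d}A_j\to A_j(t)$ pointwise, whence $F_{nj}\to F_j$ pointwise and the bias introduced by the perturbation is asymptotically negligible.

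First I would treat $N_j/n$ and $Y_j/n$. Writing $N_j(t)/n=(n_j/n)\,\widehat H_{nj}(t)$ with $\widehat H_{nj}(t)=n_j^{-1}\sum_i I\{X_{ji}\le t,\delta_{ji}=1\}$, the map $\widehat H_{nj}$ is monotone with mean $H_{nj}(t)=\int_0^t[1-G_j]\,\mathrm{d}F_{nj}$ and pointwise variance $O(n^{-1})$. By Chebyshev, $\widehat H_{nj}(t)\to H_j(t)=\int_0^t[1-G_j]\,\mathrm{d}F_j$ in probability for each $t$, using $F_{nj}\to F_j$ and the continuity of $F_j$. Since the limit $t\mapsto\kappa_j H_j(t)=\nu_j(t)$ is continuous and monotone, pointwise convergence of the monotone maps $N_j/n$ upgrades to uniform convergence on $[0,\infty)$ by a P\'olya--Dini type argument. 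The at-risk term is identical in spirit: $Y_j(t)/n=(n_j/n)\,n_j^{-1}\sum_i I\{X_{ji}\ge t\}$ is monotone with mean $(n_j/n)[1-F_{nj}(t)][1-G_j(t)]\to y_j(t)$, and the same monotonicity argument gives $\sup_t|Y_j/n-y_j|\to0$.

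Next I would handle the pooled Kaplan--Meier estimator on $[0,\tau_0]$. Summing the two limits just obtained yields $\sup_t|N/n-\nu|\to0$ and $\sup_t|Y/n-y|\to0$ with $y=\sum_j y_j$, and since $y$ is nonincreasing with $y(\tau_0)>0$ (because $\tau_0<\tau$ forces every $[1-F_j][1-G_j]>0$ there), $Y/n$ is bounded away from zero on $[0,\tau_0]$ with probability tending to one. An integration-by-parts bound, splitting the difference into $\int(f_n-f)\,\mathrm{d}(N/n)$ and $\int f\,\mathrm{d}(N/n-\nu)$ with $f_n=I\{Y>0\}n/Y$ and $f=1/y$ of bounded variation, then shows that the pooled Nelson--Aalen estimator $\widehat A(t)=\int_0^t I\{Y>0\}/Y\,\mathrm{d}N$ converges uniformly on $[0,\tau_0]$ to $A_0(t)=\sum_j\int_0^t y^{-1}\,\mathrm{d}\nu_j$. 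Finally, writing $1-\widehat F_n$ as the product-integral of $1-\mathrm{d}\widehat A$ and invoking the continuity of the product-integral map at a continuous limit \citep[Sec.\ II.6]{ABGK}, uniform convergence $\widehat A\to A_0$ transfers to $1-\widehat F_n\to\exp(-A_0)=1-F_0$ uniformly on $[0,\tau_0]$.

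The main obstacle is the triangular-array structure: classical Glivenko--Cantelli and Kaplan--Meier consistency results are stated for a fixed distribution, whereas here $F_{nj}$ drifts with $n$. I expect the cleanest route around this to be the monotonicity upgrade, which needs only pointwise convergence together with a continuous limit and therefore accommodates the drifting means $H_{nj}$ and $[1-F_{nj}][1-G_j]$ without any uniform empirical-process bound; the only care required is to confirm that the $O(n^{-1/2})$ perturbation of the hazards leaves these means converging to the correct null limits $\nu_j$ and $y_j$.
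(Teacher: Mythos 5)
Your proof is correct, and for two of the three suprema it coincides with the paper's own argument: the paper likewise derives $F_{nj}\to F_j$ pointwise from the $n^{-1/2}$-perturbation of the hazards, obtains pointwise convergence of $Y_j/n$ and $N_j/n$ via Chebyshev's inequality, and upgrades to uniform convergence using monotonicity of the processes and continuity of the limits --- exactly your P\'olya-type step. Where you genuinely diverge is the pooled Kaplan--Meier term. The paper avoids product-integral machinery altogether through the elementary identity
\begin{align*}
-\log\{1-\widehat F_n(t)\} = \int_0^t \log\Bigl\{\Bigl(1-n^{-1}\frac{n}{Y(s)}\Bigr)^n\Bigr\}\,\mathrm{d}\frac{N}{n}(s),
\end{align*}
going back to \cite{neuhaus:1993}: since $n/Y\to 1/y$ uniformly on $[0,\tau_0]$ with $y(\tau_0)>0$, the integrand converges uniformly to $-1/y$ while $N/n\to\nu$ uniformly, so the right-hand side tends uniformly to $-\sum_{j}\int_0^t y^{-1}\,\mathrm{d}\nu_j=-\log\{1-F_0(t)\}$. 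You instead first prove uniform convergence of the pooled Nelson--Aalen estimator $\widehat A\to A_0$ (your integration-by-parts decomposition is sound, since $1/y$ is monotone and bounded on $[0,\tau_0]$, hence of bounded variation there) and then appeal to continuity of the product-integral map \citep[Sec.~II.6]{ABGK}. That route also works, with one point worth recording explicitly: the product-integral is sup-norm continuous only on sets of uniformly bounded total variation, so you should note that $\widehat A(\tau_0)\le \{N(\tau_0)/n\}\sup_{s\le\tau_0}\{n/Y(s)\}=O_P(1)$, which follows immediately from the limits you already have. On the trade-off: the paper's log-representation is more self-contained and dispenses with any appeal to product-integration theory, whereas your argument is more modular --- it isolates the Nelson--Aalen convergence as the real content and would carry over unchanged to settings where only a cumulative-hazard estimator, rather than an explicit product-limit formula, is available.
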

\begin{proof}[of Lemma \ref{lem:unif_conv_w}]
	Let $F_{nj}$ $(j=1,\ldots,k)$ be the distribution function belonging to $A_{nj}$, i.e., $1-F_{nj}(t)=\exp\{-A_{nj}(t)\}$ $(t\geq 0)$. By straight forward calculations $F_{nj}(t)\to F_j(t)$ for all $t\in\R$. Combining this and Chebyshev's inequality we obtain $n^{-1}Y_j(t)\to y_j(t)$ and $N_j(t)\to \nu_j(t)$, both in probability. Since all functions are non-decreasing and the limits are continuous it is well known that both convergences hold even uniformly, which proves the convergence statements corresponding to $Y_j/n$ and $N_j/n$. From this and 
	\begin{align}\label{eqn:KME_cons}
	-\log \{ 1 - \widehat F_n(t) \} = \int_0^t \log \Bigl\{ \Bigl( 1- n^{-1}\frac{n}{Y(s)} \Bigr)^n \Bigr\} \,\mathrm{ d }\frac{N}{n}(s)
	\end{align}
	we obtain, finally, uniform convergence of the pooled Kaplan--Meier estimator to $F_0$. In the two-sample situation, \cite{neuhaus:1993}, see p.1773,  already used the representation in \eqref{eqn:KME_cons} to prove convergence of $\widehat F_n$ to $F_0$.
\end{proof}

\section{Proofs of Section \ref{sec:asy_properties}}

To avoid unnecessary repetitions, we simultaneously prove Theorems \ref{theo:null_unc}--\ref{theo:several_weights}. For this purpose, we directly consider local alternatives \eqref{eqn:local_alternatives} from Section~\ref{sec:local}, where the null hypothesis is covered by setting $\gamma_j\equiv 0$ for all $j=1,\ldots,k$. In the main paper, we restricted to polynomial $\widetilde w_r\in\mathcal W$ but the proofs are valid for general weights $\widetilde w_r\in \mathcal W$ fulfilling the following weaker assumption:
\begin{assumption}\label{ass:lin_ind_suppl}
	There is a number $L$ such that the functions $\widetilde w_1,\ldots,\widetilde w_m$ are linearly independent on any subset of $(0,\tau)$ with at least $L$ different elements.
\end{assumption}
Obviously, Assumption \ref{ass:lin_ind} implies Assumption \ref{ass:lin_ind_suppl}. Recall the definition of the centred Nelson--Aalen-type integrals
\begin{align*}
\widetilde Z_{nj}(w_{nr}) = n^{1/2}\int_0^\infty w_{nr}(t) \,\mathrm{ d }(\widehat A_j - A_j)(t), \quad \mbf{Z}_n(w_{nr}) = \{Z_{n1}(w_{nr}),\ldots,Z_{n1}(w_{nr})\}\trans.
\end{align*}
Subsequently, we prove a central limit theorem for $\mbf{\widetilde Z}_n=\{ \mbf{\widetilde Z}_n(w_{n1})\trans,\ldots,\mbf{\widetilde Z}_n(w_{nm})\trans\}\trans$ as well as the consistency of the covariance matrix estimator $\mbf{\widehat \Sigma}$ introduced in Section \ref{sec:comb_weights}. 
\begin{theorem}\label{theo:null_unc_general}
	The statistic $\mbf{\widetilde Z}_n$ converges in distribution to a multivariate normal distribution $N(\mbf{\mu}, \mbf{\Sigma})$ with expectation (row) vector $\mbf{\mu}=(\mu_{11},\ldots,\mu_{1k},\mu_{21},\ldots,\mu_{mk})\trans\in\R^{mk}$ defined by
	\begin{align*}
	\mu_{rj} = \int_0^\infty  w_r(t) \gamma_j(t) \,\mathrm{ d }A_j(t),\quad w_r(t) = \widetilde w_r\{F_0(t)\}\frac{y_1(t)\ldots y_k(t)}{y(t)^{k-1}},
	\end{align*}
	and with regular covariance (block-)matrix $\mbf{\Sigma}= (\mbf{\Sigma}^{(rr')})_{r,r'=1,\ldots,m}\in\R^{km\times km}$, where each submatrix $\mbf{\Sigma}^{(rr')} = \text{diag}(\sigma^{2,(rr')}_1,\ldots,\sigma^{2,(rr')}_k)$ is a diagonal matrix with entries
	\begin{align}\label{eqn:def_Sigma}
	\sigma^{2,(rr')}_j = \int_0^\infty  \frac{w_r(t) w_{r'}(t)}{ y_j(t)} \,\mathrm{ d }A_j(t) \quad (j=1,\ldots,k).
	\end{align}
\end{theorem}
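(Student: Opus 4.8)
The plan is to establish the multivariate central limit theorem for $\mbf{\widetilde Z}_n$ by exploiting the martingale structure underlying the Nelson--Aalen estimators, then to verify the limiting mean and covariance separately. First I would write each component in its martingale form. Under the multiplicative intensity model of Aalen, the process $M_{nj}(t) = N_j(t) - \int_0^t Y_j(s)\,\mathrm{d}A_{nj}(s)$ is a local square-integrable martingale with predictable variation $\langle M_{nj}\rangle(t) = \int_0^t Y_j(s)(1 - \Delta A_{nj}(s))\,\mathrm{d}A_{nj}(s)$, which under continuity reduces to $\int_0^t Y_j\,\mathrm{d}A_{nj}$. Writing $\widetilde Z_{nj}(w_{nr}) = n^{1/2}\int_0^\infty w_{nr}(s)I\{Y_j(s)>0\}Y_j(s)^{-1}\,\mathrm{d}M_{nj}(s) + R_{nj}$, where the remainder $R_{nj}$ collects the centering shift from $A_{nj}$ to $A_j$, I would first isolate the drift term $R_{nj}$ as the source of the nonzero mean $\mu_{rj}$ and treat the martingale part as the source of the covariance $\mbf{\Sigma}$.

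Next I would identify the limiting mean. The centering uses the fixed $A_j$ rather than the alternative $A_{nj}$, so the difference $\mathrm{d}A_{nj} - \mathrm{d}A_j = n^{-1/2}\gamma_j(t)\,\mathrm{d}A_j(t)$ from \eqref{eqn:local_alternatives} produces a deterministic drift. Multiplying by $n^{1/2}$ cancels the rate and yields, via Lemma \ref{lem:unif_conv_w}, the limit $\mu_{rj} = \int_0^\infty w_r(t)\gamma_j(t)\,\mathrm{d}A_j(t)$ once I substitute the uniform limits $w_{nr}\to w_r$ (using $\widehat F_n \to F_0$ and $Y_j/n \to y_j$). To make this rigorous I would invoke uniform convergence of the integrands on compacts and a standard argument bounding the contribution on the tail $[\tau_0,\infty)$ using that $w_{nr}$ vanishes there because $Y_j/(nY^{k-1})$ is controlled near $\tau$. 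The covariance structure follows from Rebolledo's martingale central limit theorem applied to the vector of martingale integrals: the predictable covariation between the $(r,j)$ and $(r',j')$ integrals is $n\int_0^\infty w_{nr}w_{nr'}I\{Y_j>0\}Y_j^{-2}\,\mathrm{d}\langle M_{nj}\rangle$, which is zero for $j\neq j'$ by independence across groups (orthogonal martingales) and converges to $\sigma^{2,(rr')}_j = \int_0^\infty w_r w_{r'} y_j^{-1}\,\mathrm{d}A_j$ when $j=j'$, matching \eqref{eqn:def_Sigma}. Checking the Lindeberg condition is routine since the jumps are of order $n^{-1/2}$ and the weights are bounded.

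The main obstacle I anticipate is proving that $\mbf{\Sigma}$ is \emph{regular}, i.e. nonsingular, rather than merely a valid limit. Because the $m$ weighted integrals within each group $j$ share the same underlying martingale $M_{nj}$, the block $(\sigma^{2,(rr')}_j)_{r,r'}$ is a Gram-type matrix of the functions $w_1,\ldots,w_m$ with respect to the measure $y_j^{-1}\,\mathrm{d}A_j$ on $(0,\tau)$. Nonsingularity of each such block is exactly where Assumption \ref{ass:lin_ind_suppl} enters: linear independence of $\widetilde w_1,\ldots,\widetilde w_m$ on sufficiently many points of $(0,\tau)$ guarantees that no nontrivial linear combination $\sum_r c_r w_r$ vanishes $\mathrm{d}A_j$-almost everywhere on the support, so the Gram matrix is positive definite. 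I would argue this by contradiction: a null combination would force $\sum_r c_r \widetilde w_r\{F_0(t)\} = 0$ on a set carrying positive $\mathrm{d}A_j$-mass, and since $F_0$ is strictly increasing on $(0,\tau)$ this set maps to an infinite subset of $(0,1)$, contradicting Assumption \ref{ass:lin_ind_suppl} for $m$ polynomials (or functions) that cannot share infinitely many common zeros of a fixed linear combination. Once each diagonal block is positive definite and the off-group blocks vanish, the full $\mbf{\Sigma}$ inherits regularity, completing the proof of the stated theorem.
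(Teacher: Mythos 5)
Your proposal is correct and follows essentially the same architecture as the paper's proof: the same martingale representation under Aalen's multiplicative intensity model (the paper's $M_{nrj}$ in \eqref{eqn:martingale} is exactly your weighted stochastic integral against $\mathrm{d}M_{nj}$), Rebolledo's central limit theorem with the Lindeberg condition verified through the bound $|w_{nr}/(Y_j/n)|\leq K$, the same identification of the mean $\mu_{rj}$ as the deterministic drift $n^{1/2}\int w_{nr}\,(\mathrm{d}A_{nj}-\mathrm{d}A_j)=\int w_{nr}\gamma_j\,\mathrm{d}A_j$ handled by Slutzky, and the identical regularity argument via the quadratic form $\mbf{\beta}\trans\mbf{\Sigma}\mbf{\beta}$ reducing to per-group Gram matrices of $\widetilde w_1,\ldots,\widetilde w_m$ in $F_0(t)$, killed by Assumption \ref{ass:lin_ind_suppl} since $F_0$ increases wherever $F_j$ does. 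The one spot where you diverge is the passage to the limit of the predictable covariations \eqref{eqn:quadra_variat} on the unbounded time interval: the paper invokes Gill's extended dominated convergence result \citep[Prop.~II.5.3]{ABGK}, constructing the random envelope via Wellner's inequality, which gives $Y_j/n\leq 2\lambda\kappa_j\{1-F_j\}\{1-G_j\}=2\lambda y_j$ uniformly with probability at least $1-e/\lambda$, so that $g_\lambda=2K^2\lambda y_j\alpha_j\mathbf{1}_{[0,\tau)}$ dominates and is integrable; you instead propose uniform convergence on compacts $[0,\tau_0]$ plus a tail truncation. Your route is workable --- indeed the paper itself uses precisely this truncation device in the proof of Theorem \ref{theo:perm} --- but your justification of the tail is loose: $w_{nr}$ does not literally vanish on $[\tau_0,\infty)$; rather one should bound the tail integrand by $K^2(Y_j/n)\,\alpha_{nj}$ and observe that $\int_{\tau_0}^{\infty}(Y_j/n)\,\alpha_{nj}(t)\,\mathrm{d}t$ is the tail of the compensator of $N_j/n$, whose expectation tends to $\kappa_j\int_{\tau_0}^{\infty}\{1-G_j(s)\}\,\mathrm{d}F_j(s)$ and is made arbitrarily small by letting $\tau_0\nearrow\tau$; with that repair (or by adopting the paper's Gill--Wellner envelope, which settles the drift term and Lemma \ref{lem:sig_cons_suppl} in the same stroke) your argument is complete.
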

\begin{lemma}\label{lem:sig_cons_suppl}
	The estimator $\mbf{\widehat \Sigma}$ is consistent for $\mbf\Sigma$, i.e., $\widehat\sigma^{2,(rr')}_j \to \sigma^{2,(rr')}_j$ in probability.
\end{lemma}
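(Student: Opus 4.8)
The plan is to reduce $\widehat\sigma^{2,(rr')}_j$ to a Stieltjes integral against the normalised counting measure $\mathrm{d}(N_j/n)$, for which the uniform convergences of Lemma~\ref{lem:unif_conv_w} apply directly. Writing $\mathrm{d}\widehat A_j=I\{Y_j>0\}Y_j^{-1}\,\mathrm{d}N_j$ and substituting the definition \eqref{eqn:def_wn} of $w_{nr}$, every power of $n$ and of $Y_j$ cancels, leaving
\begin{align*}
\widehat\sigma^{2,(rr')}_j=\int_0^\infty \widetilde w_r\{\widehat F_n(t-)\}\,\widetilde w_{r'}\{\widehat F_n(t-)\}\,\frac{\prod_{l\neq j}\{Y_l(t)/n\}^2}{\{Y(t)/n\}^{2(k-1)}}\,I\{Y_j(t)>0\}\,\mathrm{d}\Bigl(\frac{N_j(t)}{n}\Bigr).
\end{align*}
On the target side I would use the elementary identity $\mathrm{d}A_j/y_j=\mathrm{d}\nu_j/y_j^2$, which follows from $y_j=\kappa_j(1-G_j)(1-F_j)$, $\mathrm{d}\nu_j=\kappa_j(1-G_j)\,\mathrm{d}F_j$ and $\mathrm{d}A_j=(1-F_j)^{-1}\mathrm{d}F_j$, to recast $\sigma^{2,(rr')}_j=\int_0^\infty g(t)\,\mathrm{d}\nu_j(t)$ with limiting integrand $g(t)=\widetilde w_r\{F_0(t)\}\widetilde w_{r'}\{F_0(t)\}\prod_{l\neq j}y_l(t)^2/y(t)^{2(k-1)}$. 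Thus both quantities have the matching shape $\int g_n\,\mathrm{d}(N_j/n)$ and $\int g\,\mathrm{d}\nu_j$, where $g_n$ is the empirical integrand displayed above.

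Fix $\tau_0\in(0,\tau)$. I would decompose the difference over $[0,\tau_0]$ as $\int_0^{\tau_0}(g_n-g)\,\mathrm{d}(N_j/n)+\int_0^{\tau_0} g\,\mathrm{d}(N_j/n-\nu_j)$. Since the polynomials $\widetilde w_r,\widetilde w_{r'}$ are uniformly continuous on $[0,1]$, Lemma~\ref{lem:unif_conv_w} gives $\sup_{[0,\tau_0]}|g_n-g|\xrightarrow{p}0$; because $Y_l/n\le Y/n$ forces $0\le g_n\le\|\widetilde w_r\|_\infty\|\widetilde w_{r'}\|_\infty=:M$ and $N_j(\infty)/n\to\nu_j(\infty)<\infty$, the total mass stays bounded and the first term vanishes in probability. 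For the second term, $g$ is continuous and of bounded variation on $[0,\tau_0]$, so a Helly--Bray / integration-by-parts argument combined with $\sup_t|N_j(t)/n-\nu_j(t)|\xrightarrow{p}0$ yields $\int_0^{\tau_0} g\,\mathrm{d}(N_j/n-\nu_j)\xrightarrow{p}0$.

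The main obstacle is the tail $(\tau_0,\infty)$, where Lemma~\ref{lem:unif_conv_w} no longer controls $\widehat F_n$ uniformly. The crucial observation is that the group $l_0$ attaining $\tau=\tau_{l_0}=\min_l\tau_l$ contributes a factor $y_{l_0}$ (respectively $Y_{l_0}/n$) to the numerators, so both integrands are \emph{supported in} $[0,\tau]$ up to vanishing error. For the limit, $g\equiv0$ on $(\tau,\infty)$, and by continuity of $\nu_j$ the residual $\int_{\tau_0}^{\tau}g\,\mathrm{d}\nu_j\le M\{\nu_j(\tau)-\nu_j(\tau_0)\}\to0$ as $\tau_0\uparrow\tau$. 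For the empirical tail I would distinguish two cases. If $\tau_j=\tau$, then $\nu_j$ charges nothing beyond $\tau$, so the increment $N_j(\infty)/n-N_j(\tau_0)/n\xrightarrow{p}\nu_j(\infty)-\nu_j(\tau_0)=\nu_j(\tau)-\nu_j(\tau_0)$ is small and $g_n\le M$ finishes it. If $\tau_j>\tau$, then $l_0\neq j$, and since $g_n$ contains the factor $\{Y_{l_0}/n\}^2$ it vanishes identically for $t>X_{l_0,\max}:=\max_i X_{l_0,i}$, where $X_{l_0,\max}\xrightarrow{p}\tau_{l_0}=\tau$; hence $\int_{\tau_0}^\infty g_n\,\mathrm{d}(N_j/n)\le M\{N_j(X_{l_0,\max})/n-N_j(\tau_0)/n\}$, whose limsup is at most $\nu_j(\tau)-\nu_j(\tau_0)$ by the uniform convergence $N_j/n\to\nu_j$ and the continuity of $\nu_j$.

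Assembling the pieces and letting first $n\to\infty$ and then $\tau_0\uparrow\tau$ shows $\widehat\sigma^{2,(rr')}_j-\sigma^{2,(rr')}_j\xrightarrow{p}0$; since $r,r'\in\{1,\dots,m\}$ and $j\in\{1,\dots,k\}$ range over finite index sets, the convergence holds simultaneously, giving consistency of $\widehat{\mbf\Sigma}$ for $\mbf\Sigma$. I expect the tail control, in particular justifying that the exhaustion time $X_{l_0,\max}$ pins the effective upper integration limit at $\tau$ when $\tau_j>\tau$, to be the only genuinely delicate step; everything else is a uniform-convergence-plus-Helly--Bray routine.
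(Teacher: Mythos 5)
Your argument is correct, but it takes a genuinely different route from the paper's. The paper obtains Lemma~\ref{lem:sig_cons_suppl} as a one-line byproduct of its martingale machinery: $\widehat\sigma^{2,(rr')}_j$ is exactly the optional covariation $[M_{nrj},M_{nr'j}]$ evaluated at $t=\infty$, and Rebolledo's theorem yields its convergence to the same limit as the predictable covariation \eqref{eqn:quadra_variat}, whose limit \eqref{eqn:int_gill} is identified via Gill's extended dominated convergence result \citep[Prop.~II.5.3]{ABGK}, with the dominating function $g_\lambda$ supplied by Wellner's in-probability lower bound for $n_j^{-1}Y_j$ relative to $\{1-F_j\}\{1-G_j\}$ \citep{wellner:1978}. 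You instead give a direct, martingale-free argument: after the cancellation you exhibit, $\widehat\sigma^{2,(rr')}_j=\int g_n\,\mathrm{d}(N_j/n)$ with $|g_n|\leq M:=\|\widetilde w_r\|_\infty\|\widetilde w_{r'}\|_\infty$ because $\prod_{l\neq j}Y_l^2\leq Y^{2(k-1)}$ (one small slip: $g_n$ need not be nonnegative, since $\widetilde w_r\widetilde w_{r'}$ changes sign for weights such as $1-2t$, so your bound ``$0\leq g_n\leq M$'' should read $|g_n|\leq M$ with absolute values carried through the tail estimates --- this is cosmetic, as the absolute bound is all you use), and consistency then follows from Lemma~\ref{lem:unif_conv_w} plus a Helly--Bray step on $[0,\tau_0]$ together with your tail analysis on $(\tau_0,\infty)$. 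Your case distinction there ($\tau_j=\tau$, where $\nu_j$ charges nothing beyond $\tau$, versus $\tau_j>\tau$, where the factor $\{Y_{l_0}(t)/n\}^2$ kills $g_n$ beyond $X_{l_0,\max}\xrightarrow{p}\tau$) is sound, and it is precisely what lets you dispense with Wellner's inequality: your integrator $N_j/n$ has uniformly bounded total mass and your integrand is bounded, whereas the paper's predictable-variation computation retains a factor $1/(Y_j/n)$ against $\alpha_j(t)\,\mathrm{d}t$ and therefore genuinely needs the domination device. What each approach buys: yours is elementary and self-contained, establishing the consistency of $\mbf{\widehat\Sigma}$ without any counting-process limit theory, which is attractive if one wants this lemma in isolation; the paper's route costs nothing extra in context, since Rebolledo's theorem must be invoked anyway for Theorem~\ref{theo:null_unc_general}, and it delivers the central limit theorem and Lemma~\ref{lem:sig_cons_suppl} simultaneously from one covariation computation.
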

All results mentioned in Section \ref{sec:asy_properties} follow from Theorem \ref{theo:null_unc_general} and Lemma \ref{lem:sig_cons_suppl}. In particular, the convergence statements of the Wald-type statistics $S_n(w_n)$ and $S_n$ in Theorems \ref{theo:local_alternatives} \& \ref{theo:several_weights}, respectively, follow from Theorem \ref{theo:null_unc_general}, Lemma \ref{lem:sig_cons_suppl}, the continuous mapping theorem and Theorem 9.2.2 of \cite{rao:mitra:1971}.

\subsection{Proof of Theorem \ref{theo:null_unc_general} and Lemma \ref{lem:sig_cons_suppl}}
For the proof, we combine martingale theory and the counting process approach. We refer the reader to \cite{ABGK} for a detailed introduction to both. For our purposes, their Chapters II.5.1 and III.3 are mainly relevant. For a certain filtration, which we do not want to specific here, the processes $t\mapsto w_{nr}(t)$ are predictable and $(N_1,\ldots,N_k)\trans$ fulfills the multiplicity intensity model \citep{ABGK} with intensity process $(\lambda_{n1},\ldots,\lambda_{nk})\trans$,
\begin{align*}
\lambda_{nj} = Y_j \alpha_{nj}\quad (j=1,\ldots,k).
\end{align*}
In particular, $\widetilde M_{nj}(t) = N_j(t) - \int_0^{t} Y_j \,\mathrm{ d }A_{nj}$ and, thus,
\begin{align}\label{eqn:martingale}
t \mapsto [M_{nrj}(t)]_{r=1,\ldots,m;j=1,\ldots,k} = \Bigl[ \int_0^t n^{1/2} w_{nr}(s) \frac{1}{Y_j(s)}\,\mathrm{ d }\widetilde M_{nj}(s) \Bigr]_{r=1,\ldots,m;j=1,\ldots,k}
\end{align}
are (multivariate) local square integrable martingales. Let $K_{r}=\sup\{|\widetilde w_r(t)|:t\in\R\}$ $(r=1,\ldots,m)$ and $K=\max\{K_{r}:r=1,\ldots,m\}$. Then $|w_{nr}/(Y_j/n)|$ is uniformly bounded by $K$ and, thus, the integrand in \eqref{eqn:martingale} is bounded by $n^{-1/2}K$. Consequently, the Lindeberg condition is always fulfilled for $M_{nrj}$. By Rebolledo's Theorem it remains to discuss the predictable covariation processes (only at the end point $t=\infty$)
\begin{align}\label{eqn:quadra_variat}
&\langle M_{nrj} , M_{nr'j} \rangle = n\int_0^\infty w_{nr}(t)w_{nr'}(t) \frac{1}{Y_j^2(t)} \alpha_{nj}(t) \,\mathrm{ d }t \nonumber \\
&= \int_0^\infty \frac{w_{nr}(t)w_{nr'}(t)}{n^{-2}Y_j(t)^2} \frac{Y_j(t)}{n}\alpha_j(t) \,\mathrm{ d }t + n^{-1/2}  \int_0^\infty \frac{w_{nr}(t)w_{nr'}(t)}{n^{-2}Y_j(t)^2} \frac{Y_j(t)}{n}\gamma_j(t)\alpha_j(t) \,\mathrm{ d }t.
\end{align}
Due to the underlying independence between the groups we have $\langle M_{nrj} , M_{nr'j'} \rangle = 0$ for $j\neq j'$. In the following, we will show 
\begin{align}\label{eqn:int_gill}
\int_0^\infty \frac{w_{nr}(t)w_{nr'}(t)}{n^{-2}Y_j(t)^2} \frac{Y_j(t)}{n} \alpha_j(t) \,\mathrm{ d }t \to \int_0^\infty w_{r}(t)w_{r'}(t) \frac{1}{y_j(t)} \alpha_j(t) \,\mathrm{ d }t
\end{align}
in probability. Denote by $I_{nrr'j}$ and $I_{rr'j}$ the integrands on the left and right hand side, respectively. Note that the weighting functions cause $I_{nrr'j}(t)=0=I_{rr'j}(t)$ for all $t\in[\tau,\infty)$, when $\tau<\infty$. By Lemma \ref{lem:unif_conv_w} $I_{nrr'j}$ converges pointwisely to $I_{rr'j}$. Thus, by a result of Gill \citep[Prop. II.5.3]{ABGK}, it is sufficient for \eqref{eqn:int_gill} to show that $\int_0^\infty |I_{rr'j}(t)| \,\mathrm{ d }t<\infty$ and that there is an integrable function $g_\lambda$ for every $\lambda>1$ such that
\begin{align}\label{eqn:suff_gill}
\P \Bigl( |I_{nrr'j}(t)| \leq g_\lambda(t) \text{ for all }t\Bigr) \geq 1 - \frac{e}{\lambda}.
\end{align}
Since $\alpha_j =f_j /(1-F_j)$ we obtain
\begin{align}\label{eqn:calcu1}
\int_0^\infty  | I_{rr'j}(t) | \,\mathrm{ d }t \leq K^2 \int_0^\infty y_j(t) \alpha_j(t) \,\mathrm{ d }t = K^2 \int_0^\infty \{1-G_j(t)\}\,\mathrm{ d }F_j(t) < \infty.
\end{align}
Moreover, we obtain from Remark 1(i) of \cite{wellner:1978} that for all $\lambda\geq 1$
\begin{align*}
\P \Bigl( \sup_{t\in[0,\tau)}\Bigl|  \frac{n_j^{-1}Y_j(t)}{\{1-G_j(t)\}\{1-F_j(t)\}} \Bigr| \leq\lambda  \Bigr) \geq 1 - \frac{e}{\lambda}.
\end{align*}
Since $n_j/n\leq 2\kappa_j$ for sufficiently large $n$, we can deduce that \eqref{eqn:suff_gill} is fulfilled for the integrable function $g_\lambda= {1}_{[0,\tau)} 2K^2\lambda y_j \alpha_j$, the integrability follows from the calculation in \eqref{eqn:calcu1}. 
Analogously, we can deduce that the second summand in \eqref{eqn:quadra_variat} converges to $0$. Finally, 
\begin{align*}
\langle M_{nrj} , M_{nr'j} \rangle \to \int_0^\infty  \frac{w_{r}(t)w_{r'}(t)}{y_j(t)} \alpha_j(t) \,\mathrm{ d }t = \sigma^{2,(rr')}_j.
\end{align*}
in probability. The aforementioned Theorem of Rebolledo implies distributional convergence of $(M_{nrj}(\infty))_{r=1,\ldots,m;j=1,\ldots,k}$ to a centred multivariate normal distribution with covariance matrix $\mbf{\Sigma}$ and also convergence of the optional covariation process to $\mbf{\Sigma}$, i.e.,
\begin{align*}
n\int_0^\infty w_{nr}(t)w_{nr'}(t) \frac{1}{Y_j^2(t)} \,\mathrm{ d }N_j(t) = [M_{nrj} , M_{nr'j}] \to \sigma^{2,(rr')}_j
\end{align*}
in probability. The latter implies Lemma \ref{lem:sig_cons_suppl}. \\
In the general situation of local alternatives, we have 
\begin{align*}
\widetilde Z_{nj}(w_{nr}) = M_{nrj}(\infty) + \int_0^\infty w_{nr}(t) \gamma_j(t) \alpha_j(t) \,\mathrm{ d }t.
\end{align*}
Using the same strategy as for \eqref{eqn:int_gill}, we can prove that the second integrand converges in probability to $\mu_{rj}$. Consequently, the convergence of $\widetilde Z_n$ follows from Slutzky's Lemma.

It remains to prove the regularity of $\mbf{\Sigma}$. Therefor, let $\mbf{\beta}=(\beta_{11},\ldots,\beta_{1k},\beta_{21},\ldots,\beta_{km})\trans\in \R^{km}$ such that $\mbf{\beta}\trans\mbf{ \Sigma}\mbf{\beta} = 0$. Then
\begin{align}\label{eqn:betaSigmabeta=0}
0=\mbf{\beta}\trans\mbf{ \Sigma}\mbf{\beta} &= \sum_{j=1}^k \sum_{r=1}^m\sum_{r'=1}^m \beta_{rj} \beta_{r'j} \int_0^\infty \frac{w_r(t)w_{r'}(t)}{y_j(t)} \,\mathrm{ d }A_j(t) \nonumber \\
& = \sum_{j=1}^k \kappa_j  \int_0^\tau \Bigl[ \sum_{r=1}^m \beta_{rj} \widetilde w_r\{F_0(t)\} \Bigr]^2\frac{y_1(t)\ldots y_k(t)}{y(t)y_j(t)^2} \,\{1-G_j(t)\}\mathrm{ d }F_j(t).
\end{align}
Since $F_j(t)>F_j(s)$ always implies $F_0(t)>F_0(s)$, we can deduce for every $j=1,\ldots,k$ from $F_j(\tau)>0$ and \eqref{eqn:betaSigmabeta=0} that $\sum_{r=1}^m \beta_{rj} \widetilde w_r\{F_0(t)\}=0$ for infinitely many different $t$. Hence, $\beta_{rj}=0$ follows from Assumption \ref{ass:lin_ind_suppl}, which proves the regularity of $\mbf\Sigma  $.

\section{Proof of Theorem \ref{theo:perm}}
Let $X_{(1)}<\ldots<X_{(n)}$ be the order statistics of the pooled observations. Denote by $\delta_{(i)}$ the corresponding censoring status and $c_{(i)}$ the group membership, i.e., $c_{(i)}=j$ iff $X_{(i)}$ belongs to group $j$. By Lemma \ref{lem:unif_conv_w} we have in probability
\begin{align}\label{eqn:unif_Y+N}
\sup_{t\in[0,\infty)} \Bigl | \frac{Y(t)}{n} - y(t) \Bigr| + \sup_{t\in[0,\infty)} \Bigl | \frac{N(t)}{n} - \nu(t)\Bigr| \to 0.
\end{align}
By restricting to appropriate subsequence, we can suppose that \eqref{eqn:unif_Y+N} holds with probability one, even when we consider local alternatives. Without loss of generality, we work along such subsequences for the remaining proof. From now on, let the observations be fixed. The permutation approach affects just the group membership $c_{(i)}$ of $X_{(i)}$ and not the censoring status $\delta_{(i)}$. That is why we can treat $N$, $Y$ and $\widehat F_n$ as fixed functions. We can assume without loss of generality that \eqref{eqn:unif_Y+N} holds and, thus, by \eqref{eqn:KME_cons} we have $\sup_{t\in[0,\tau_0]}|\widehat F_n(t)- F_0(t)| \to 0$ for all $\tau_0$ with $y(\tau_0)>0$, where $F_0$ is specified in Lemma \ref{lem:unif_conv_w}.  We add the superscript $^\pi$ to the quantities actually depending on the permuted data, i.e., we write $\mbf{Z}_n^\pi,Y_j^\pi, c_{(i)}^\pi, N_j^\pi$ etc.

For the proof, we write all relevant quantities as sums and use discrete martingale techniques to derive the asymptotic distribution. Similarly to the proof for the asymptotic test, we do not derive a central limit theorem for $\mbf{Z}_n^\pi$ itself but for $\mbf{\bar{Z}}^\pi_n=[\bar Z_{1}(w_{n1}^\pi),\ldots,\bar Z_{k}(w_{n1}^\pi),\bar Z_{1}(w_{n2}^\pi),\ldots,\bar Z_{k}(w_{nm}^\pi)]\trans\in\R^{km}$, a centred version of it, with elements 
\begin{align}\label{eqn:def_tilde_Z_pi}
\bar Z_{j}(w_{nr}^\pi) &= n^{1/2}\int_0^\tau w_{nr}^\pi(t) \,\mathrm{ d }( \widehat A_j - \widehat A)(t) \nonumber \\
&= \sum_{i=1}^n n^{1/2}w_{nr}^\pi(X_{(i)}) \Bigl( \frac{\Delta N_j^\pi(X_{(i)})}{Y_j^\pi(X_{(i)})} - \frac{\Delta N(X_{(i)})}{Y(X_{(i)})}  \Bigr).
\end{align}
It is easy to see that $\mbf{T}^{(m)} \mbf{Z}^\pi_n = \mbf{T}^{(m)} \mbf{\bar{Z}}^\pi_n$. Now, introduce the filtration
\begin{align*}
\mathcal F_{n,i} = \sigma( c_{(1)}^\pi,\ldots,c_{(i)}^\pi).
\end{align*}
Clearly, $w_n^\pi(X_{(i)})$ and $Y_j^\pi(X_{(i)})$ are predictable under this filtration. Moreover, 
\begin{align*}
\E\Bigl( \Delta N_j^\pi(X_{(i)}) \mid \mathcal F_{n,i-1} \Bigr) = \delta_{(i)}\E\Bigl( I\{c_{(i)}^\pi = j\} \mid \mathcal F_{n,i-1} \Bigr) = \delta_{(i)}\frac{Y_j^\pi(X_{(i)})}{Y(X_{(i)}) }
\end{align*}
and, thus, the summands in \eqref{eqn:def_tilde_Z_pi} form indeed a martingale difference scheme. Since the summands in \eqref{eqn:def_tilde_Z_pi} are uniformly bounded by $n^{-1/2}K$, where $K$ is defined as in the previous proof, the Lindeberg condition is always fulfilled. Again, it just remains to discuss the predictable covariation process given by
\begin{align*}
C^{(rr')}_{njj'} =&\,	n\sum_{i=1}^n \E \Bigl[ w_{nr}^\pi(X_{(i)})w_{nr'}^\pi(X_{(i)})\Bigl\{ \frac{\Delta N_j^\pi(X_{(i)})}{Y_j^\pi(X_{(i)})} - \frac{\Delta N(X_{(i)})}{Y(X_{(i)})} \Bigr\} \\
&\phantom{n\sum_{i=1}^n \E \Bigl[}\cdot\Bigl\{ \frac{\Delta N^\pi_{j'}(X_{(i)})}{Y^\pi_{j'}(X_{(i)})} - \frac{\Delta N(X_{(i)})}{Y(X_{(i)})}\Bigr\} \mid \mathcal F_{n,i-1} \Bigr] \\
=& \,n \sum_{i=1}^{n} w_{nr}^\pi(X_{(i)})w_{nr'}^\pi(X_{(i)}) \Delta N(X_{(i)}) \Bigl\{ \frac{I\{j=j'\}}{Y_j^\pi(X_{(i)})Y(X_{(i)})} - \frac{1}{Y(X_{(i)})^2} \Bigr\} \\
=& \int_0^\infty w_{nr}^\pi(t)w_{ns}^\pi(t)\Bigl\{ \frac{I\{j=j'\}}{n^{-2}Y_j^\pi(t)Y(t)} - \frac{1}{n^{-2}Y(t)^2} \Bigr\} \,\mathrm{ d }\frac{N}{n}(t).
\end{align*}
Fix $\tau_0\in(0,\infty)$ such that $y(\tau_0)>0$. Then \cite{neuhaus:1993} showed, see his Equation (6.1), that 
\begin{align*}
\sup_{t\in[0,\tau_0]} \Bigl | \frac{Y_j^\pi(t)}{Y(t)} - \eta_j \Bigr| \xrightarrow{P} 0.
\end{align*}
Combining this with the uniform convergence of $\widehat F_n$, $N/n$ and $Y/n$ we can deduce that
\begin{align*}
&\int_0^{\tau_0} w_{nr}^\pi(t)w_{ns}^\pi(t)\Bigl\{ \frac{I\{j=j'\}}{n^{-2}Y_j^\pi(t)Y(t)} - \frac{1}{n^{-2}Y(t)^2} \Bigr\} \,\mathrm{ d }\frac{N}{n}(t)\\
&\to \int_0^{\tau_0} w_r^\pi(t)w^\pi_{r'}(t)\frac{1}{y(t)^2}\Bigl\{ \frac{I\{j=j'\}}{\eta_j} - 1 \Bigr\}  \,\mathrm{ d }\nu(t)
\end{align*}
in probability, where $w_r^\pi(t)=\widetilde w_{r}\{F_0(t)\} (\prod_{j=1}^k\kappa_j){y(t)}$. Moreover, 
\begin{align*}
&\Bigl| \int_{\tau_0}^\infty w_{nr}^\pi(t)w_{ns}^\pi(t)\Bigl\{ \frac{I\{j=j'\}}{n^{-2}Y_j^\pi(t)Y(t)} - \frac{1}{n^{-2}Y(t)^2} \Bigr\} \,\mathrm{ d }\frac{N}{n}(t) \Bigr| \\
&\leq \int_{\tau_0}^\infty \Bigl | \frac{w_{nr}^\pi(t)w_{ns}^\pi(t)}{n^{-2}Y_j^\pi(t)Y(t)}\Bigl\{ I\{j=j'\} - \frac{Y_j^\pi(t)}{Y(t)} \Bigr\}  \Bigr| \mathrm{ d }\frac{N}{n}(t) \\
&\leq K^2 \int_{\tau_0}^{\infty} \frac{N}{n}(t) \to K^2 \int_{\tau_0}^\infty \nu(t) = K^2\sum_{j=1}^k\kappa_j\int_{\tau_0}^\infty \{1-G_j(s)\}\,\mathrm{ d }F_j(s).
\end{align*}
Letting $\tau_0\nearrow \inf\{t\in\R:y(t)=0\}$ gives us
\begin{align}\label{eqn:C_conv_perm}
C^{(rr')}_{njj'} \to \int_0^\infty w_{r}^\pi(t)w_{r'}^\pi(t)\frac{1}{y(t)^2}\Bigl\{ \frac{I\{j=j'\}}{\eta_j} - 1 \Bigr\}  \,\mathrm{ d }\nu(t)=:{\bar \Sigma}^{\pi,(rr')}_{jj'}
\end{align}
in probability. Thus, $ \mbf{\bar Z}^\pi_n$ converges in distribution to a centred multivariate normal distribution with covariance (block-)matrix $\mbf{\bar \Sigma}^\pi = (\mbf{\bar\Sigma}^{\pi,(rr')})_{r,r'=1,\ldots,k}$, where the submatrices are given by the right hand side of \eqref{eqn:C_conv_perm}. Similarly to the argumentation for \eqref{eqn:C_conv_perm}, we can deduce the convergence of the permutation counterpart of the covariance estimator:
\begin{align*}
\widehat \sigma^{2,\pi,(rr')}_j & = I\{j=j'\}\int_0^\infty w_{nr}^\pi(t)w_{nr'}^\pi(t) \frac{n^2}{Y_j^\pi(t)^2}\,\mathrm{ d }\frac{N_j^\pi}{n}(t) \\
&\to I\{j=j'\}\eta_j^{-1}\int_0^\infty w_{r}^\pi(t)w_{r'}^\pi(t) \frac{1}{y(t)^2}\,\mathrm{ d }\nu(t) =: \widehat \sigma^{2,\pi,(rr')}_j. 
\end{align*}
Thus, the permutation counterpart $\mbf{\widehat \Sigma^{\pi}}$ of the covariance matrix estimator converges in probability to the (block-)matrix $\mbf{\Sigma}^\pi= (\mbf{\Sigma}^{\pi,(rr')})_{r,r'}$, where each submatrix $\mbf{\Sigma}^{\pi,(rr')} = \text{diag}(\widehat \sigma^{2,\pi,(rr')}_1,\ldots,\widehat \sigma^{2,\pi,(rr')}_k)$ is a diagonal matrix. The matrix $\mbf{\Sigma}^\pi$ does not coincide with the limiting matrix $\mbf{\bar\Sigma}^\pi$ of the permutation statistic. But the submatrices can be rewritten as  
\begin{align*}
\mbf{\Sigma}^{\pi,(rr')} = \psi_{rr'}\mbf{D}, \quad \mbf{\bar\Sigma}^{\pi,(rr')} = \psi_{rr'}(\mbf{D} - \mbf{1}_k\mbf{1}_k\trans),\quad \mbf{D}=\text{diag}(\eta_1^{-1},\ldots,\eta_k^{-1}),	
\end{align*}
where $\psi_{rr'}=\int_0^\infty w_{r}^\pi(t)w_{r'}^\pi(t)/y(t)^2 \,\mathrm{ d }\nu(t)$. Thus, it is easy to check $\mbf{T}\mbf{\Sigma}^{\pi,(rr')} = \mbf{T}\mbf{\bar\Sigma}^{\pi,(rr')}$ and, consequently, $\mbf{T}^{(m)}\mbf{\Sigma}^\pi = (\mbf{T}\mbf{\Sigma}^{\pi,(rr')})_{r,r'=1,\ldots,m} = \mbf{T}^{(m)}\mbf{\bar\Sigma}^\pi$ follows, which is sufficient for convergence of the joint permutation Wald-type statisticf $S_n^\pi$. Using the same arguments as in the proof of Theorem \ref{theo:null_unc_general}, we can show the regularity of $\mbf{\Sigma}^\pi$. Hence, $(\mbf{T}\widehat{\mbf{\Sigma}^\pi} \mbf{T})^+$ converges in probability to $(\mbf{T}\mbf{\Sigma}^\pi \mbf{T})^+ = (\mbf{T}\mbf{\bar\Sigma}^\pi \mbf{T})^+ $. Finally, Theorem \ref{theo:perm} follows from the continuous mapping theorem.

\section*{Acknowledgement}
Marc Ditzhaus and Markus Pauly were supported by the {Deutsche Forschungsgemeinschaft} (Grant no.  PA-2409 5-1). 
The authors thank Philipp Steinhauer for computational assistance.

\bibliographystyle{chicago}
\bibliography{sample}

\end{document}